\documentclass[conference]{IEEEtran}
\IEEEoverridecommandlockouts
\usepackage{cite}
\usepackage{amsmath,amssymb,amsfonts}
\usepackage{algorithmic}
\usepackage{graphicx}
\usepackage{textcomp}
\def\BibTeX{{\rm B\kern-.05em{\sc i\kern-.025em b}\kern-.08em
    T\kern-.1667em\lower.7ex\hbox{E}\kern-.125emX}}
\usepackage{mathtools}
\usepackage {bm}
\usepackage{amsthm}
\usepackage[utf8]{inputenc}
\usepackage[english]{babel}
\usepackage{algorithm,algorithmic}
\newtheorem{theorem}{Theorem}
\newtheorem{lemma}[theorem]{Lemma}
\newtheorem{proposition}[theorem]{Proposition}

\newtheorem{definition}[theorem]{Definition}
\DeclareMathOperator*{\argmax}{arg\, max}

\def \Bma {\mathcal{B}}
\def \Bmat {\mathcal{B}_{t}}
\def \Bmath {\mathcal{B}_{t}^{\text{th}}}
\def \Nma {\mathcal{N}}


\def \Gcal {\mathcal{G}}
\def \Ebb {\mathbb{E}}

\def \Rbb {\mathbb{R}}
\def \RbbY {\mathbb{R}^{Y}}
\def \Rbbn {\mathbb{R}^{n}}
\def \Ccal {\mathcal{C}}
\def \Hxy {{\mathcal{H}}_Y^X}

\def \Lxy {{\mathcal{L}}_Y^X}
\def \Mxy {{\mathcal{M}}_Y^X}
\def \Xbfcal {\boldsymbol{\mathcal{X}}}
\def \Xcal {{\mathcal{X}}}
\def \Xmin {{\mathcal{X}}_{\min}}
\def \Xmax {{\mathcal{X}}_{\max}}
\def \hatXmax {\hat{\mathcal{X}}_{\max}}

\def \Xbar {\bar {\mathcal{X}}}
\def \Xmid {{\mathcal{X}}_{m}}
\def \Haxy {\hat {\mathcal{H}}_Y^X}
\def \Laxy {\hat {\mathcal{L}}_Y^X}
\def \Maxy {\hat {\mathcal{M}}_Y^X}
\def \Bhat {\hat {\mathcal{B}}}
\def \xbm {{\bm{x}}}
\def \xbmd {{\bm{x}}^{\prime}}
\def \xbmt {{\bm{x}}_{t}}
\def \xbmtone {{\bm{x}}_{t+1}}

\def \xbmtbar {\bar {x}_{t}}

\def \xbmtybar {{\bar{x}}_{t, -i}}
\def \xti {x_{t,i}}

\def \stysqr {s_{t,-i}^{2}}

\def \sbm {\bm{s}}
\def \sbmt {{\bm{s}}_{t}}
\def \sbmtone {{\bm{s}}_{t+1}}
\def \ut {u_{t}}
\def \uat {u(a_{t})}
\def \ctat {c_{t}(a_{t})}
\def \ctd {c_{t}^{\prime}}
\def \Rcal {\mathcal{R}_{t}}
\def \yt {y_{t}}
\def \ytone {y_{t+1}}
\def \at {a_{t}}
\def \abar {\bar {a}}
\def \atB {a_{t}^{B}}
\def \ats {a_{t}^{*}}
\def \atsn {a_{t}^{*,\text{n}}}
\def \atsp {a_{t}^{*,\text{p}}}
\def \atU {a_{t}^{U}}
\def \atrng {0 \le a_{t} \le {\hat a}}
\def \amax {\hat a}
\def \zt {z_{t}}
\def \ztone {z_{t+1}}
\def \utaz {u_{t}(a_{t};z_{t})}

\def \utc {\pi_{t}}
\def \dutc {\pi^{\prime}_{t}}
\def \utca {\pi_{t}(a_{t})}
\def \utcaz {\pi_{t}(a_{t};z_{t})}

\def \gt {g_{t}}
\def \gT {g_{T}}
\def \qt {q_{t}}
\def \rbp {r}
\def \Vt {V_{t}}
\def \Vtone {V_{t+1}}
\def \Vtn {V_{t}^{\text{n}}}
\def \Vtp {V_{t}^{\text{p}}}

\def \atBH {\{a_{t}^{\text{B}}\}_{t \in \mathbb{H}}}
\def \abar {\bar{a}}

\def \acheck {\check{a}}
\def \ucheck {\check{u}}

\def \Phit {\Phi_{\text{t}}}

\def \ept {\varepsilon_{\text{t}}}
\def \asf {\mathsf{a}}
\def \bsf {\mathsf{b}}
\def \st {s_{t}}
\def \stsqr {s_{t}^{2}}
\def \fbm {{f}}
\def \styt {\sbmt \vert_{\yt = 0}}

\def \thetabm {\boldsymbol{\theta}}
\def \thst {{\theta}^{*}}

\def \atL {a_{t}^{\text{L}}}
\def \aiL {a_{i}^{\text{L}}}
\def \atR {a_{t}^{\text{R}}}
\def \aiR {a_{i}^{\text{R}}}
\def \Dcal {\mathcal{D}}
\def \tht {\theta_t}
\def \thtst {\theta_t^{*}}
\def \thtonest {\theta_{t=1}^{*}}

\begin{document}
\title{Analysis and Evaluation of Baseline Manipulation in Demand Response Programs}

\author{Xiaochu~Wang,~\IEEEmembership{Student Member,~IEEE}~and~Wenyuan~Tang,~\IEEEmembership{Member,~IEEE}
\thanks{Xiaochu Wang and Wenyuan Tang are with the Department of Electrical and Computer Engineering, North Carolina State University, Raleigh, NC 27695 USA e-mail: xwang78@ncsu.edu, wtang8@ncsu.edu.}}

\maketitle

\begin{abstract}
The customer baseline is required to assign rebates to participants in baseline-based demand response (DR) programs. The average baseline method has been widely accepted in practice due to its simplicity and reliability. However, the customer's baseline manipulation is little-known in the literature. We start from a customer's perspective and establish a Markov decision process to model the customer's payoff-maximizing problem. The behavior of a rational customer's underconsumption on DR days and overconsumption on non-DR days are revealed. Furthermore, we propose an approximated baseline method and show how the consumption distribution and program parameters affect the results. Due to the curse of dimensionality, a linear policy-based rollout algorithm is introduced to obtain a practical approximate solution. Finally, a case study is carried out to illustrate the baseline manipulation, where the simulation results confirm the effectiveness of the proposed methods and shed light on how to properly design baseline methods.
\end{abstract}
\begin{IEEEkeywords}
Baseline manipulation, baseline method, demand response, dynamic programming, Markov decision process.
\end{IEEEkeywords}
\section{Introduction}
\IEEEPARstart{D}{mand} response (DR) is considered as an effective and reliable solution to balance the power supply and demand when the power system is under stress \cite{vardakas2015survey}. 
In incentive-based DR programs, the program provider reaches an agreement with customers, who are offered rebates for their participation or active load adjustment according to the real-time instructions \cite{palensky2011demand}. 
The customer baseline calculation is critical to those DR programs under which rebates are paid to the customers for consumption below the baseline. 
If the baseline is overestimated, the program provider overpays the customers which may result in program deficit. Underestimation of baseline may keep customers away from participation. 
Coming up with a perfect baseline method is demanding. 
First, the baseline is the counterfactual consumption that cannot be measured directly.
In addition, the implementation of DR program may motivate the customers to change their consumption patterns, which makes the baseline calculation more difficult \cite{chao2011demand}.

Accuracy, integrity, and simplicity are three critical characteristics of proper baseline methods \cite{rossetto2018measuring}. 
Integrity shows that a good baseline is robust to the customer's attempts to manipulate the system, while the simplicity requires the baseline mechanism to be simple enough for end-use customers. 
The average baseline methods have been widely accepted by system operators and third-party entities, e.g., PJM, NYISO, CAISO, and EnerNOC due to their simplicity and reliability \cite{enernoc2011the}. 
The baseline accuracy analysis has attracted wide attention in the literature. 
The performance of several popular baseline methods is analyzed in \cite{wijaya2014bias} based on the bias metric.
An elaborated error analysis of the current existing baseline methods, including HighXofY, LowXofY, exponential moving average, and regression method, is given in \cite{mohajeryami2017error}. An economic analysis of a peak time rebate program is also carried out. Their case study reveals that the utility pays at least half of its revenue as a rebate solely due to the inaccuracy of baseline methods.

Although the inaccurate estimation of customer baseline is claimed as the main reason for the program inefficiency, baseline manipulation is also a key factor for this issue \cite{wolak2007residential}. Customers are encouraged to inflate their baselines to gain higher rebates if the baseline methods were chosen improperly. Such manipulation in a day-ahead load response program is illustrated in \cite{england2008docket}. Consider the HighXofY method as an example, which picks the highest $X$ data from the last $Y$ days proceeding the DR event and calculates the average of the $X$ data as the customer's baseline. Under this baseline method, customers may intentionally overconsume energy to inflate their baselines, in exchange for a possibly high rebate in the future \cite{wang2018overconsume}. 
For example, customers put all lights on in a stadium during daytime to increase the probability of getting a large amount of rebates \cite{borenstein2014peak}. 

In the literature, there are few mathematical models in quantifying such manipulations.
The customer's manipulation is confirmed as a rational behavior in \cite{vuelvas2015demand}, but their two-stage model is too simple to represent the true scenario. An infinite-horizon model is established to investigate the customer's manipulation behavior \cite{chao2013incentive}. However, their model is only applicable for the HighYofY method with stationary utility function.
The multi-stage stochastic optimization model in \cite{ellman2019model} is also only valid under the HighYofY method.
The incentive-compatible mechanism design is applied to eliminate the baseline manipulation behavior \cite{zhou2017eliciting, li2017mechanism}.
A novel contract is proposed to address the manipulation through self-report baseline in \cite{vuelvas2018limiting}. Nonetheless, the self-report procedure may burden the customers from participation. The oversimplified assumptions cannot address the real-world settings.

The rest of this paper is organized as follows. In Section II, we model the payoff-maximizing problem as a Markov decision process, which is then solved in Section III. 
In Section IV, we propose an approximated baseline method to show more structural results. The approximate dynamic programming is given in Section V. In Section VI, we conduct case studies to illustrate the manipulation behavior. Section VII concludes the paper.

\section{Markov Decision Process Formulation}
A Markov decision process (MDP) is a mathematical framework for solving sequential decision-making problems under uncertainties. For rational customers who want to maximize the expected payoff, they need to consider DR event uncertainty and then identify the optimal policies. Therefore we model the payoff-maximizing problem as a MDP.
\subsection{DR Event}
A DR event indicates the situation with high wholesale market prices or when the system reliability is under stress. The corresponding day is called a DR-event day, or simply, a DR day. A day without a DR event is called a non-DR day. We use $y_t$ to indicate whether day $t$ is a DR day or not, defined as
\begin{equation}
    \label{eq:yt}
    {y_{t}} = \begin{cases}
                1, & \text{if day $t$ is a DR day},\\
                0, & \text{otherwise}.
              \end{cases}
\end{equation}
A DR event typically lasts from one to several hours, which is referred as the DR duration. In this paper, we only consider one-hour duration for convenience.

\subsection{Baseline Method}
The average method, exponential moving average, and regression method are three popular baseline methods in the literature \cite{mohajeryami2017error}.
Under the average method, the baseline is calculated based on the consumption from last $Y$ non-DR days, $\xbmt = (x_{t,1}, x_{t,2},\ldots, x_{t,Y})$. 
We use $\Bma(\xbmt): \RbbY \to \Rbb$ as the function to calculate the baseline under average methods. 
Three typical average methods are HighXofY, LowXofY, and MidXofY, where the baselines are calculated by averaging the highest, lowest, and middle $X$ consumption of $\xbmt$, respectively. Based on the definition, one can formulate the HighXofY and LowXofY methods as\footnote{We only take the $X$ as the variable of the baseline methods since the information $Y$ is included in the recent consumption $\xbmt$.} 
\begin{equation}
\label{eq:Hxy_max}
    \Hxy(\xbmt, X) = (1/X)\mathop {\max }\limits_{1 \le {i_1} \le  \cdots  \le {i_X} \le Y} \{ {x_{t,{i_1}}} +  \cdots  + {x_{t,{i_X}}}\},
\end{equation}
\begin{equation}
    \label{eq:Lxy_max}
    \Lxy(\xbmt, X) = (1/X) \mathop{\min} \limits_{1 \le {i_1} \le  \cdots  \le {i_X} \le Y} \{{x_{t,{i_1}}} +  \cdots  + {x_{t,{i_X}}} \}.
\end{equation}
The MidXofY method takes $X$ and $Y$ with the same parity and calculates the baseline by averaging the $X$ middle consumption of the ordered elements of $\xbmt$ \cite{wijaya2014bias}. Let
$x_{(1)}, x_{(2)}, \ldots, x_{(X)} $ be the middle $X$ consumption (the time index $t$ is removed for convenience), then MidXofY is given by
\begin{equation}
    \label{eq:Mxy}
    \Mxy (\xbmt, X) = \textstyle (1/X) [x_{(1)}+ \ldots, x_{(X)}].
\end{equation}
Among the average methods, HighXofY been widely accepted in practice \cite{wijaya2014bias}:
\begin{itemize}
\item PJM: High4of5 for a weekday, and High2of3 for a weekend DR event.
\item NYISO: High5of10 for a weekday, and High2of3 for a weekend DR event.
\item CAISO: High10of10 for a weekday, and High4of4 for a weekend DR event.
\end{itemize}
As a result, we focus on the baseline analysis for the HighXofY method. The analysis method can also be applied to LowXofY and MidXofY.
\subsection{Customer Utility Function}
A customer obtains benefits from electricity consumption, captured by the utility function $\utaz$, where $\at$ is the customer's electricity consumption and $\zt$ is an external parameter indicating the benefit variation on different days. 
The utility function is defined as non-decreasing and concave in consumption \cite{samadi2010optimal}.
There are several forms of utility functions in the literature, e.g., the quadratic and exponential utility functions. Our model is valid under any non-decreasing and concave utility functions.

Let $\ctat$ be the cost of purchasing electricity from load serving entities. Specifically, we consider the flat retail price $\omega$ and therefore we have $\ctat = \omega \at$. A rational customer will choose the intrinsic baseline $\atB$ as the optimal consumption. The intrinsic baseline can be obtained via
\begin{equation}
    \label{eq:atB}
    \atB \in \argmax_{\atrng} \utcaz,
\end{equation}
where $\utcaz \coloneqq \utaz - \ctat$ is the customer net utility function. The consumption is subjected to a physical constraint $\atrng$.

\subsection{Formulation of the MDP Model}
Consider a finite time horizon of $T$ days, indexed by $t=0,1,\ldots, T-1$, with a terminal stage $T$. The state at day $t$ is given by $\sbmt = (\xbmt, \yt, \zt)$, where $\xbmt = (x_{t,1}, \ldots, x_{t,Y})$ is the recent consumption from the last $Y$ non-DR days (with $x_{t, 1}$ the most recent), $\yt$ is the indicator of DR events, and $\zt$ is an external parameter indicating the variation of the utility function.
The decision variable on day $t$ is the consumption $\at$. The transition from $\xbmt$ to $\xbmtone$ is deterministic, given by 
\begin{equation}
    \label{eq:x_transition}
    \begin{split}
       \xbmtone & = (x_{t+1,1}, \ldots, x_{t+1, Y}) \\ 
       & = \begin{cases}
        (\at, x_{t,1}, \ldots, x_{t,Y-1}), & \text{if } \yt = 0,\\
        \xbmt = (x_{t,1}, \ldots, x_{t,Y}), & \text{if } \yt = 1,
    \end{cases}
    \end{split}
\end{equation}
whereas the transition from $\yt$ to $\ytone$ is captured by the conditional probability $p(\ytone|\yt)$. The immediate payoff of the MDP model is 
\begin{equation}
    \label{eq:immediatePayoff}
    \gt(\sbmt, \at) = \utcaz + \yt \rbp [\Bma(\xbmt) - \at]_{+}, \; \forall\;t<T
\end{equation}
where $[x]_{+} = \max \{x,0\}$, with the terminal payoff as $\gT (\sbm_{T}, a_{T}) = 0$. The net utility function $\utcaz$ is assumed to be concave in $\at$, and the $\rbp$ is the rebate price for load reduction $[\Bma(\xbmt) - \at]_{+}$.

A rational customer's problem is to find an optimal policy $\{\ats(\sbmt),\forall t\}$ to maximize the expected payoff $\Ebb_{\{\yt, \zt, \forall t\}} \left[ \mathop{\sum}_{t=0}^{T}{\gt (\sbmt)} \right]$.
Following a dynamic programming approach, we obtain the Bellman equation as
\begin{equation}
    \label{eq:BellmanEq}
    \Vt(\sbmt) = \mathop{\max}_{\atrng} \{\gt(\sbmt, \at) + \Ebb_{\sbmtone \vert \sbmt}[\Vtone(\sbmtone)]   \},
\end{equation}
for $t<T$, with $V_{T}(\sbm_{T}) = 0$. The maximizers $\{\ats(\sbmt),\forall t\}$ for this optimization problem give the optimal policy.

\section{Optimal Policy and Structural Results}
In this section, we solve the payoff-maximizing problem using dynamic programming. In addition, we establish several structural results, which provide insights into the optimal policies.

\subsection{Properties of the Average Baseline Method}
To derive structural results of the optimal policies, we show the monotonicity, convexity, and supermodularity of the average methods. 
\begin{definition}[Monotonicity]
For any $\xbm = (x_1, \ldots, x_n)$ and $\xbmd = (x_1^{\prime}, \ldots, x_n^{\prime})$ in $\Rbbn$,  $\xbm \le \xbmd$ if $x_i \le x_i^{\prime}, \; \forall {i} = 1, \ldots, n$. A function $f:\Rbbn \to \Rbb$ is \textit{increasing} if for any $\xbm, \xbmd \in \Rbbn$ with $\xbm \le \xbmd$, $f(\xbm) \le f(\xbmd)$. $f$ is \textit{decreasing} if $(-f)$ is increasing.
\end{definition}
\begin{proposition} The average methods, i.e., $\Hxy(\xbmt, X)$, $\Lxy(\xbmt, X)$,
and $\Mxy (\xbmt, X)$, are \textit{increasing} in $\xbmt$. Furthermore, $\Hxy(\xbmt, X)$ is decreasing in $X$, while $\Lxy(\xbmt, X)$ is increasing in $X$.
\end{proposition}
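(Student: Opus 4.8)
The plan is to first re-express each of the three methods in terms of the order statistics $x_{(1)} \le x_{(2)} \le \cdots \le x_{(Y)}$ of the entries of $\xbmt$. The maximization in \eqref{eq:Hxy_max} is clearly attained by choosing the $X$ largest entries, so $\Hxy(\xbmt, X) = (1/X)\sum_{j=Y-X+1}^{Y} x_{(j)}$; symmetrically $\Lxy(\xbmt, X) = (1/X)\sum_{j=1}^{X} x_{(j)}$ from \eqref{eq:Lxy_max}; and by definition \eqref{eq:Mxy} is $\Mxy(\xbmt, X) = (1/X)\sum_{j=(Y-X)/2+1}^{(Y+X)/2} x_{(j)}$ (using that $X$ and $Y$ share parity, so $(Y-X)/2$ entries are trimmed from each end). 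Writing all three uniformly as averages of contiguous blocks of order statistics makes both assertions reduce to elementary inequalities.

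For monotonicity in $\xbmt$, for $\Hxy$ and $\Lxy$ I would argue directly from \eqref{eq:Hxy_max}--\eqref{eq:Lxy_max}: $\Hxy(\cdot, X)$ is a pointwise maximum of the linear maps $\xbmt \mapsto (1/X)(x_{t,i_1} + \cdots + x_{t,i_X})$, each having nonnegative coefficients and hence increasing, and a maximum of increasing functions is increasing. Concretely, if $\xbm \le \xbmd$ in $\Rbbn$ and $S^\ast$ attains the maximum for $\xbm$, then $\Hxy(\xbm, X) = (1/X)\sum_{i \in S^\ast} x_i \le (1/X)\sum_{i \in S^\ast} x'_i \le \Hxy(\xbmd, X)$, the last step because $S^\ast$ is a feasible index set for $\xbmd$. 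The identical template, with the maximum replaced by a minimum and the witnessing set chosen for the \emph{larger} argument $\xbmd$, gives that $\Lxy(\cdot, X)$ is increasing. For $\Mxy$ I would invoke the standard fact that each order statistic is monotone — e.g.\ $x_{(k)} = \min_{|S|=k}\max_{i \in S} x_i$ is a minimum of increasing functions, hence increasing in $\xbmt$ — so the average of the middle block is increasing as well.

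For the dependence on $X$, I would use the ``adding a below-average (resp.\ above-average) element shifts the mean down (resp.\ up)'' principle. Passing from $X$ to $X+1$ in HighXofY appends $x_{(Y-X)}$, the $(X+1)$-th largest entry, to the block of the top $X$; since $x_{(Y-X)} \le x_{(j)}$ for every $j \in \{Y-X+1, \ldots, Y\}$, clearing denominators gives $X\,x_{(Y-X)} \le \sum_{j=Y-X+1}^{Y} x_{(j)}$, which rearranges to $\Hxy(\xbmt, X+1) \le \Hxy(\xbmt, X)$. The LowXofY case is the mirror image: going from $X$ to $X+1$ appends $x_{(X+1)} \ge x_{(j)}$ for $j \le X$, hence at least the running average, so $\Lxy(\xbmt, X) \le \Lxy(\xbmt, X+1)$. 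Chaining these one-step inequalities yields monotonicity in $X$ over the feasible range.

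The only genuinely nonroutine point — the ``hard part,'' such as it is — is the MidXofY monotonicity in $\xbmt$, since, unlike High/Low, $\Mxy$ is not literally a max or min of linear forms in $\xbmt$; this is handled by the short order-statistic lemma above (equivalently, a direct exchange/coupling argument that $x_{(k)}(\xbm) \le x_{(k)}(\xbmd)$ whenever $\xbm \le \xbmd$). I would also record the harmless caveat that the intended ranges in \eqref{eq:Hxy_max}--\eqref{eq:Lxy_max} are $i_1 < \cdots < i_X$ (so that exactly $X$ distinct days are selected) and that the max/min there is attained at the top/bottom $X$ order statistics, which is precisely what legitimizes the reformulation in the first paragraph.
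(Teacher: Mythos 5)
Your proof is correct and complete: the reduction to contiguous blocks of order statistics, the max-of-increasing-linear-forms argument for $\Hxy$ (mirrored for $\Lxy$ and handled via monotone order statistics for $\Mxy$), and the ``appending a below-/above-average element'' step for monotonicity in $X$ are all sound, and your caveat that the index constraint in the definitions should be read as $i_1 < \cdots < i_X$ is a legitimate correction of the paper's notation. The paper states this proposition without any proof, treating it as immediate from the definitions, so your argument simply supplies the standard verification the paper leaves implicit.
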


\begin{definition}[Convexity]
A set $\Ccal \subseteq \Rbbn$ is \textit{convex} if for any $\xbm, \xbmd \in \Ccal$ and $\lambda \in [0, 1]$, $\lambda \xbm + (1-\lambda) \xbmd \in \Ccal$. Given a convex set $C \subseteq \Rbbn$, a function $f:\Rbbn \to \Rbb$ is \textit{convex} over set $\Ccal$ if for any $\xbm,\xbmd \in \Ccal$, and $\lambda \in [0,1]$, 
\begin{equation}
    \label{eq:convexity}
    f(\lambda \xbm + (1-\lambda)) \le \lambda f(x) + (1-\lambda) f(\xbmd).
\end{equation}
$f$ is \textit{concave} if $(-f)$ is convex.
\end{definition}
\begin{proposition}
The HighXofY method $\Hxy(\xbmt, X)$ is \textit{convex} in $\xbmt$ since the pointwise maximum operation preserves convexity. The LowXofY method $\Lxy(\xbmt, X)$ is \textit{concave} due to the pointwise minimum operation.
\end{proposition}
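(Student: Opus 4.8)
The plan is to recognize $\Hxy(\xbmt,X)$ as a positive multiple of a pointwise maximum of finitely many affine functions of $\xbmt$, and then to invoke the elementary fact that a pointwise maximum of convex functions is convex; the claim for $\Lxy(\xbmt,X)$ will follow by the mirror argument using pointwise minima (equivalently, by applying the same lemma to $-\Lxy$).

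Concretely, I would fix $X$ and $Y$ and, for each admissible index tuple $I=(i_1,\ldots,i_X)$ with $1\le i_1\le\cdots\le i_X\le Y$, set $\ell_I(\xbmt)\coloneqq x_{t,i_1}+\cdots+x_{t,i_X}$. Each $\ell_I\colon\RbbY\to\Rbb$ is a finite sum of coordinate evaluations, hence affine, and therefore both convex and concave on the convex set $\RbbY$. Since there are only finitely many such tuples $I$, the expression $\Hxy(\xbmt,X)=(1/X)\max_I\ell_I(\xbmt)$ is a well-defined maximum of a finite family, and likewise $\Lxy(\xbmt,X)=(1/X)\min_I\ell_I(\xbmt)$. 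I would then record the standard lemma: if $f=\max_k f_k$ for a finite family of functions $f_k$ that are convex on a convex set $\Ccal$, then $f$ is convex on $\Ccal$, which is a one-line estimate — for $\xbm,\xbmd\in\Ccal$ and $\lambda\in[0,1]$,
\begin{align*}
f(\lambda\xbm+(1-\lambda)\xbmd) &= \max_k f_k(\lambda\xbm+(1-\lambda)\xbmd)\\
&\le \max_k\big[\lambda f_k(\xbm)+(1-\lambda)f_k(\xbmd)\big]\\
&\le \lambda f(\xbm)+(1-\lambda)f(\xbmd).
\end{align*}
Applying this with $f_k=\ell_I$ shows $\max_I\ell_I$ is convex, and multiplication by the positive constant $1/X$ preserves convexity, so $\Hxy(\xbmt,X)$ is convex in $\xbmt$. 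For LowXofY I would write $-\Lxy(\xbmt,X)=(1/X)\max_I(-\ell_I(\xbmt))$, observe that each $-\ell_I$ is affine hence convex, conclude $-\Lxy$ is convex by the same lemma, and therefore $\Lxy(\xbmt,X)$ is concave in $\xbmt$.

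The main obstacle is, frankly, that there is essentially none — which is why the paper disposes of the result in one sentence. The only points that deserve a line of care are that the index set over which the extremum is taken is finite, so the maximum/minimum is attained and the finite-family lemma genuinely applies, and that the ambient domain $\RbbY$ is convex so the definition of convexity is meaningful; no regularity or boundary considerations arise because each $\ell_I$ is globally affine. It is also worth noting explicitly that this argument does \emph{not} extend to $\Mxy(\xbmt,X)$: an average of middle order statistics is in general neither convex nor concave, which is consistent with the proposition claiming convexity/concavity only for HighXofY and LowXofY.
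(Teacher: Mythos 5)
Your proof is correct and is exactly the argument the paper gestures at: HighXofY is a positive multiple of a finite pointwise maximum of affine (hence convex) functions, and the mirror argument via $-\Lxy$ gives concavity of LowXofY. The paper offers no further detail beyond the one-sentence justification in the proposition statement, so your write-up is simply the fully expanded version of the same route.
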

\begin{definition}[Supermodularity]
\label{def:modularity}
For any $\xbm = (x_1, \ldots, x_n)$ and $\xbmd = (x_1^{\prime}, \ldots, x_n^{\prime})$ in $\Rbbn$, define their \textit{join} as $\xbm \vee \xbmd = (\max \{x_1, x_1^{\prime}\}, \ldots, \max \{x_n, x_n^{\prime}\})$
and their \textit{meet} as $\xbm \wedge \xbmd = (\min \{x_1, x_1^{\prime}\}, \ldots, \min \{x_n, x_n^{\prime}\})$.
A function $f:\Rbbn \to \Rbb$ is \textit{supermodular} if for any $\xbm, \xbmd \in \Rbbn$,
\begin{equation}
    \label{eq:supermodularity}
    f(\xbm) + f(\xbmd) \le f(\xbm \vee \xbmd) + f(\xbm \wedge \xbmd).
\end{equation}
$f$ is \textit{submodular} if $(-f)$ is supermodular.
\end{definition}
\begin{proposition}
\label{pop:modularity}
The HighXofY method $\Hxy(\xbmt, X)$ is \textit{submodular} in $\xbmt$, and the LowXofY method $\Lxy(\xbmt, X)$ is \textit{supermodular} in $\xbmt$.
\end{proposition}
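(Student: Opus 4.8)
The plan is to sidestep the non-smooth $\max$/$\min$-of-order-statistics directly and instead exploit a variational (linear-programming-dual) representation of the HighXofY baseline. Since $X\,\Hxy(\xbm,X)$ equals the sum of the $X$ largest entries of $\xbm\in\RbbY$, i.e. the value of $\max\{\langle z,\xbm\rangle:0\le z_i\le 1,\ \sum_i z_i=X\}$, dualizing this linear program (or computing directly) yields
\begin{equation*}
\Hxy(\xbm,X)=\min_{\theta\in\Rbb}F(\xbm,\theta),\qquad F(\xbm,\theta):=\theta+\tfrac1X\sum_{i=1}^{Y}[x_i-\theta]_+,
\end{equation*}
where the minimum is attained at any $\theta$ lying between the $X$-th and $(X+1)$-th largest coordinates of $\xbm$ (and at any $\theta\le\min_i x_i$ when $X=Y$). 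I would then establish the HighXofY part in two steps: (i) $F$ is submodular on the lattice $\RbbY\times\Rbb$; (ii) partial minimization over $\theta$ preserves submodularity in $\xbm$, so that $\Hxy(\cdot,X)=\min_\theta F(\cdot,\theta)$ is submodular.

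For step (i), $F$ is a sum of the modular term $\theta$ together with the terms $\tfrac1X[x_i-\theta]_+$, each of which involves only the pair of coordinates $(x_i,\theta)$; since a sum of functions that are submodular in the coordinates on which they depend is submodular, it suffices to check that $(x,\theta)\mapsto[x-\theta]_+$ is submodular on $\Rbb^2$, i.e. has decreasing differences. This I would get from a short majorization argument: for $x\le x'$ and $\theta\le\theta'$, the pairs $(x'-\theta,\,x-\theta')$ and $(x'-\theta',\,x-\theta)$ have equal sum while the former is the more spread out, so convexity of $s\mapsto[s]_+$ gives $[x'-\theta']_+ + [x-\theta]_+ \le [x'-\theta]_+ + [x-\theta']_+$, which is precisely the decreasing-differences inequality. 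For step (ii) I would use the classical ``evaluate at the minimizers'' argument: choosing $\theta,\theta'$ with $\Hxy(\xbm,X)=F(\xbm,\theta)$ and $\Hxy(\xbmd,X)=F(\xbmd,\theta')$, submodularity of $F$ at the product-lattice points $(\xbm,\theta)$ and $(\xbmd,\theta')$ gives $F(\xbm,\theta)+F(\xbmd,\theta')\ge F(\xbm\vee\xbmd,\theta\vee\theta')+F(\xbm\wedge\xbmd,\theta\wedge\theta')$; bounding each right-hand term below by the corresponding minimum over $\theta$ yields $\Hxy(\xbm,X)+\Hxy(\xbmd,X)\ge\Hxy(\xbm\vee\xbmd,X)+\Hxy(\xbm\wedge\xbmd,X)$, which is submodularity.

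The LowXofY claim then follows with no extra work: the sum of the $X$ smallest entries of $\xbm$ is minus the sum of the $X$ largest entries of $-\xbm$, so $\Lxy(\xbm,X)=-\Hxy(-\xbm,X)$; since $\xbm\mapsto-\xbm$ interchanges $\vee$ and $\wedge$, substituting into the submodularity inequality for $\Hxy$ shows that $\xbm\mapsto\Hxy(-\xbm,X)$ is itself submodular, hence $\Lxy(\cdot,X)$ is supermodular. (Alternatively one can rerun (i)--(ii) with the dual formula $\Lxy(\xbm,X)=\max_{\theta}\bigl[\theta-\tfrac1X\sum_i[\theta-x_i]_+\bigr]$, which is a partial maximization of a supermodular integrand.) The step I expect to require the most care -- the main obstacle -- is pinning down the variational identity cleanly, especially the bookkeeping around tied coordinates and the boundary case $X=Y$; once it is in place, steps (i) and (ii) are routine. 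As a fallback I would keep Topkis's reduction of submodularity to pairs of coordinates, combined with a direct case analysis of how $x_i$ and $x_j$ rank among the remaining fixed coordinates, but that route is more tedious and I would use the dual representation as the primary one.
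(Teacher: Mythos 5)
Your proof is correct, and it takes a genuinely different route from the paper, which offers no argument for Proposition~\ref{pop:modularity} beyond the remark that it ``can be verified by its definition'' --- implicitly a direct case analysis of how perturbing a pair of coordinates changes which entries land in the top $X$. Your variational identity $\Hxy(\xbm,X)=\min_{\theta\in\Rbb}\bigl\{\theta+\tfrac{1}{X}\sum_{i}[x_i-\theta]_{+}\bigr\}$ is the standard LP dual of the top-$X$ selection problem; the majorization check that $(x,\theta)\mapsto[x-\theta]_{+}$ has decreasing differences is sound (the pair $(x'-\theta,\,x-\theta')$ contains both the largest and the smallest of the four quantities, so it majorizes $(x'-\theta',\,x-\theta)$, and convexity of $[s]_{+}$ does the rest); and the ``evaluate at the minimizers'' step is exactly the dual of the fact, invoked elsewhere in the paper in the proof of Theorem~\ref{thm:ats_nonDR_supermodular}, that partial optimization over a lattice preserves super/submodularity. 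The reduction $\Lxy(\xbm,X)=-\Hxy(-\xbm,X)$, together with the observation that negation interchanges join and meet, correctly transfers submodularity of $\Hxy$ to supermodularity of $\Lxy$. What your approach buys is a tie-free argument: the only property of the minimizing $\theta$ you actually use is its existence, which follows from convexity and coercivity of $F(\xbm,\cdot)$ when $X<Y$ and is immediate when $X=Y$, so the bookkeeping you flag as the main obstacle largely evaporates. What the paper's definitional route buys is independence from the duality identity, at the cost of the rank-change case analysis. One small point to make explicit if this were inserted into the paper: step (i) tacitly uses that a function depending on only a pair of coordinates of the product lattice is submodular on the whole lattice precisely when it has decreasing differences in that pair; this is standard (Topkis) but deserves a citation alongside Lemma~\ref{lma:Topkis}.
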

This proposition can be verified by its definition. 
It is worth noting that the monotonicity of $\Mxy(\xbmt, X)$ in $X$ is underdetermined but relies on the specific values of $\xbmt$.
The convexity and supermodularity of the MidXofY method $\Mxy(\xbmt, X)$ in $\xbmt$ are also underdetermined.
\subsection{Structural Results}
\begin{lemma}
\label{lma:Vt_mono}
The optimal value function $\Vt(\sbmt)$ is increasing in $\xbmt$ under the average method $\Bma(\xbmt)$.
\end{lemma}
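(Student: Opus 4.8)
The plan is to prove the claim by backward induction on the stage index $t$, driven by the Bellman recursion \eqref{eq:BellmanEq} and the monotonicity of the baseline map $\Bma$ established in the monotonicity proposition for the average methods.

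For the base case, $V_T(\sbm_{T}) = 0$ is constant, hence (weakly) increasing in $\bm{x}_{T}$. For the inductive step, assume $\Vtone(\sbmtone)$ is increasing in $\xbmtone$; I will show the same for $\Vt$ in $\xbmt$. Fix $\yt$ and $\zt$, and take two consumption histories $\xbmt \le \xbmt'$ (componentwise). The bracket on the right-hand side of \eqref{eq:BellmanEq} splits as $\gt(\sbmt,\at) + \Ebb_{\sbmtone\vert\sbmt}[\Vtone(\sbmtone)]$. For the immediate payoff, $\utcaz$ is free of $\xbmt$, while $\yt \rbp [\Bma(\xbmt) - \at]_{+}$ is increasing in $\xbmt$: by the monotonicity proposition $\Bma(\xbmt) \le \Bma(\xbmt')$, the map $x \mapsto [x]_{+}$ is nondecreasing, and $\yt \rbp \ge 0$. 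Hence $\gt(\sbmt,\at)$ is increasing in $\xbmt$ for each fixed $\at$.

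For the continuation term, observe that given $\sbmt$ and $\at$ the transition \eqref{eq:x_transition} is deterministic in the $\bm{x}$-coordinate and coordinatewise monotone in $\xbmt$ in \emph{both} regimes: if $\yt = 0$ then $\xbmtone = (\at, x_{t,1}, \ldots, x_{t,Y-1})$, which only drops the oldest entry and so is coordinatewise nondecreasing in $\xbmt$, and if $\yt = 1$ then $\xbmtone = \xbmt$; meanwhile the conditional law of $(\ytone, \ztone)$ given $\sbmt$ does not involve the $\xbmt$-component, being governed by $p(\ytone \vert \yt)$ and the exogenous evolution of $\zt$. Thus for every realization of $(\ytone, \ztone)$ and every fixed $\at$, the induction hypothesis gives $\Vtone(\xbmtone, \ytone, \ztone) \le \Vtone(\xbmtone', \ytone, \ztone)$, and taking expectation over $(\ytone, \ztone)$ preserves the inequality. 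Combining the two parts, the full bracket $\gt(\sbmt,\at) + \Ebb_{\sbmtone\vert\sbmt}[\Vtone(\sbmtone)]$ is increasing in $\xbmt$ for each fixed feasible $\at$.

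Finally, since the feasible action set $\{\at : \atrng\}$ is independent of $\xbmt$, a pointwise maximum of functions each increasing in $\xbmt$ is again increasing: letting $\at'$ maximize the right-hand side at $\xbmt$, $\Vt(\xbmt, \yt, \zt)$ equals the bracket evaluated at $(\xbmt, \at')$, which is at most the bracket at $(\xbmt', \at')$, which is at most $\Vt(\xbmt', \yt, \zt)$. This closes the induction. I do not anticipate a genuine obstacle here, as this is a standard monotone-dynamic-programming argument; the only points that need real checking are that the state-transition map $\xbmt \mapsto \xbmtone$ in \eqref{eq:x_transition} is coordinatewise monotone in \emph{each} of the cases $\yt = 0$ and $\yt = 1$ (so that the induction hypothesis applies to the continuation value), and that the monotonicity of the rebate term rests precisely on $\Bma$ being increasing, which is exactly the proposition we are allowed to invoke.
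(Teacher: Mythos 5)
Your proof is correct and follows essentially the same route as the paper's: backward induction on the Bellman equation, using the monotonicity of $\Bma(\xbmt)$ to show the rebate term is increasing and the coordinatewise monotonicity of the transition \eqref{eq:x_transition} to propagate the induction hypothesis through the continuation value. You have simply filled in the details that the paper's proof leaves as ``one can easily show,'' including the useful explicit check that the transition map is monotone in both the $\yt=0$ and $\yt=1$ regimes.
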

\begin{proof}
The rebate received on day $t$ is given by 
\begin{equation}
    \label{eq:rebate}
    \Rcal(\xbmt, \yt, \at) = \yt \rbp [\Bma(\xbmt) - \at]_{+}.
\end{equation}
Since the baseline $\Bma(\xbmt)$ is increasing in $\xbmt$, one can easily show the rebate $\Rcal(\xbmt, \yt, \at)$ is also increasing in $\xbmt$. According to the Bellman equation (\ref{eq:BellmanEq}) and the terminal condition, one can easily show that $\Vt(\sbmt)$ is increasing in $\xbmt$ at stage $T-1$. The conclusion then can be obtained via induction.
\end{proof}
\begin{theorem}
\label{thm:over_under}
Let us define the intrinsic baseline without strategizing as $\atB(\zt) \in \argmax_{\atrng} \utcaz$. Then we have
\begin{equation}
    \label{eq:Over_underconsumption}
    \ats(\xbmt, \yt = 1, \zt) \le \atB(\zt) \le \ats(\xbmt, \yt = 0, \zt),
\end{equation}
where $\ats(\sbmt)$ is the optimal policy with strategizing, given by
\begin{equation}
    \label{eq:ats_optimalPolicy}
    \ats(\sbmt) \in  \argmax_{\atrng} \{\gt(\sbmt, \at) + \Ebb_{\sbmtone \vert \sbmt}[\Vtone(\sbmtone)]   \}.
\end{equation}
\end{theorem}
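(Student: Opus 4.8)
The plan is to fix a stage $t<T$ and a state $\sbmt=(\xbmt,\yt,\zt)$, write the Bellman objective appearing in \eqref{eq:ats_optimalPolicy} as $f_t(\at)=\gt(\sbmt,\at)+\Ebb_{\sbmtone\vert\sbmt}[\Vtone(\sbmtone)]$, and track how each summand depends on $\at$ in the two cases $\yt=1$ and $\yt=0$. The workhorse I would record first is a one–dimensional comparative–statics fact: if $g:[0,\amax]\to\Rbb$ is concave with a maximizer $a^{\diamond}$ and $h:[0,\amax]\to\Rbb$ is nonincreasing, then every maximizer of $g+h$ is $\le a^{\diamond}$; symmetrically, if $h$ is nondecreasing, every maximizer of $g+h$ is $\ge a^{\diamond}$. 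The proof is a short contradiction: were $\tilde a>a^{\diamond}$ a maximizer of $g+h$ (nonincreasing case), then $g(\tilde a)-g(a^{\diamond})\ge h(a^{\diamond})-h(\tilde a)\ge0$, while concavity with maximizer $a^{\diamond}$ forces $g(\tilde a)\le g(a^{\diamond})$; hence equality throughout, so $a^{\diamond}$ also maximizes $g+h$ — impossible if $g$ is strictly concave, and in the general concave case it shows an optimal selection can be taken at $a^{\diamond}\le\tilde a$.

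For the DR day $\yt=1$, the consumption transition \eqref{eq:x_transition} gives $\xbmtone=\xbmt$, and since the laws of $\ytone$ and $\ztone$ do not depend on $\at$, the continuation term $\Ebb_{\sbmtone\vert\sbmt}[\Vtone(\sbmtone)]$ is constant in $\at$. Thus $f_t(\at)=\utcaz+\rbp\,[\Bma(\xbmt)-\at]_{+}+\text{const}$ is the concave function $\utcaz$ — whose maximizer subject to $\atrng$ is, by definition, $\atB(\zt)$ — plus the nonincreasing function $\at\mapsto\rbp\,[\Bma(\xbmt)-\at]_{+}$. The comparative–statics fact then yields $\ats(\xbmt,\yt=1,\zt)\le\atB(\zt)$.

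For the non–DR day $\yt=0$, the rebate term of $\gt$ vanishes, so the immediate payoff is the concave $\utcaz$ alone. Here \eqref{eq:x_transition} places $\at$ in the first coordinate of $\xbmtone=(\at,x_{t,1},\ldots,x_{t,Y-1})$, the remaining coordinates and the laws of $\ytone,\ztone$ being independent of $\at$; hence $\at\mapsto\xbmtone$ is coordinatewise increasing, and by Lemma~\ref{lma:Vt_mono} ($\Vtone$ increasing in its consumption argument) together with monotonicity of expectation, $\at\mapsto\Ebb_{\sbmtone\vert\sbmt}[\Vtone(\sbmtone)]$ is nondecreasing. Applying the comparative–statics fact in the nondecreasing direction gives $\ats(\xbmt,\yt=0,\zt)\ge\atB(\zt)$, and the two bounds together are exactly \eqref{eq:Over_underconsumption}.

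The main obstacle is not any computation but making the comparative–statics step watertight when maximizers are not unique: both $\atB(\zt)$ and $\ats(\sbmt)$ are defined merely as elements of argmax sets, so the displayed inequalities are to be read for suitably chosen optimal actions — exactly the selection the lemma above provides — and assuming $\utcaz$ strictly concave (as is common in this literature) removes the ambiguity outright. A related caveat: $[\Bma(\xbmt)-\at]_{+}$ is convex, not concave, in $\at$, so $f_t$ need not be concave on a DR day, which is why the argument must go through monotonicity rather than first–order conditions. Finally, the proof uses only that $\Bma$ is increasing in $\xbmt$ — established earlier for $\Hxy$, $\Lxy$, and $\Mxy$ — together with Lemma~\ref{lma:Vt_mono}, so the conclusion holds verbatim for all three average baseline methods.
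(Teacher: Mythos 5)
Your proposal is correct and follows essentially the same route as the paper: split on $\yt$, observe that on a DR day the continuation value is constant in $\at$ so only the decreasing rebate term perturbs the concave $\utcaz$, and on a non-DR day the continuation value is nondecreasing in $\at$ via Lemma~\ref{lma:Vt_mono} and the transition \eqref{eq:x_transition}. The only difference is that you make explicit the comparative-statics step (adding a monotone perturbation to a concave objective shifts the maximizer in the corresponding direction, with the appropriate selection when maximizers are non-unique) that the paper dispatches with ``one can easily show,'' which is a welcome tightening rather than a departure.
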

We investigate a rational customer's strategic behavior by comparing the customer's optimal policies to the intrinsic baseline. The $\ats(\xbmt, 1, \zt)$ and $\ats(\xbmt, 0, \zt)$ represent the optimal policies on DR days and non-DR days, respectively.
The above theorem shows that, to maximize the expected payoff, a customer should underconsume on the DR days and overconsume on the non-DR days. 
Although the DR program helps to alleviate the supply shortage on DR days, it also encourages the customers to consume more on the non-DR days. The analysis of this manipulation level is the main focus in this paper.
The following theorem shows the optimal policy on DR days.
\begin{theorem}
\label{thm:ats_threshold}
The optimal policy on a DR day, $\ats(\xbmt, 1, \zt)$, is a threshold policy in its baseline $\Bma(\xbmt)$, given by
\begin{equation}
    \label{eq:ats_thresholdPolicy}
    \ats(\xbmt, 1, \zt) = \begin{cases}
        \atB, & \text{if } \Bma(\xbmt) < \Bmath,\\
        \atU, & \text{if } \Bma(\xbmt) \ge \Bmath,
    \end{cases}
\end{equation}
where $\atB$ is the customer's intrinsic baseline defined in (\ref{eq:atB}), and $\atU \in \argmax_{\atrng} \{{\utcaz + \rbp(\Bma(\xbmt)-\at)}\}$ is the optimal action with penalty, with $\atU \le \atB$.
The $\Bmath$ is a threshold value that is given by 
\begin{equation}
    \label{eq:Bth_thresholdBaseline}
    \Bmath = \frac{\utc(\atB; \zt)-\utc(\atU; \zt)}{\rbp} + \atU.
\end{equation}
\end{theorem}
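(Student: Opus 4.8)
The plan is to exploit the fact that, on a DR day, the chosen consumption $\at$ has no effect whatsoever on the future. Indeed, when $\yt = 1$ the transition rule (\ref{eq:x_transition}) gives $\xbmtone = \xbmt$, while the evolution of $\yt$ (through $p(\ytone \vert \yt)$) and of the exogenous parameter $\zt$ does not involve $\at$; hence the continuation term $\Ebb_{\sbmtone \vert \sbmt}[\Vtone(\sbmtone)]$ in the Bellman equation (\ref{eq:BellmanEq}) is a constant with respect to $\at$. Therefore $\ats(\xbmt, 1, \zt)$ maximizes only the immediate payoff, i.e.
\[
\ats(\xbmt, 1, \zt) \in \argmax_{\atrng}\{\utcaz + \rbp[\Bma(\xbmt) - \at]_{+}\}.
\]
Abbreviating $B := \Bma(\xbmt)$ and $h(\at) := \utcaz + \rbp[B - \at]_{+}$, I would then rewrite $h = \max\{h_a, h_b\}$ where $h_a(\at) := \utcaz + \rbp(B - \at)$ and $h_b(\at) := \utcaz$ are both concave (a concave function plus an affine term is concave), with $h_a \ge h_b$ exactly on $\{\at \le B\}$ and $h_a \le h_b$ on $\{\at \ge B\}$. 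Their unconstrained maximizers over $\atrng$ are, by definition, $\atU$ and $\atB$, respectively.

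Next I would establish the ordering $\atU \le \atB$: concavity of $\utc$ makes $\utc'$ nonincreasing, and the optimality (or boundary) conditions for $\atB$ versus $\atU$ --- the latter carrying the extra $-\rbp < 0$ in its first-order term --- force $\atU$ to sit no further right than $\atB$. Because $h$ is a maximum of two concave functions it is in general not concave, so I cannot use a first-order condition on $h$ directly; instead I would maximize branch by branch, using $\max_{\atrng} h = \max\{\,\max_{\{\at \le B\}} h_a,\ \max_{\{\at \ge B\}} h_b\,\}$ together with the elementary fact that a concave function restricted to an interval attains its maximum at its free maximizer if that point lies in the interval and at the nearer endpoint otherwise. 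This yields three regimes according to the position of $B$ relative to $\atU$ and $\atB$. When $B \le \atU$, the $h_a$-branch is increasing up to $B$ while $h_b$ attains its free value $\utc(\atB;\zt)$, the global maximum of $\utc$, so $\ats = \atB$. When $B \ge \atB$, the $h_a$-branch attains its free value $\utc(\atU;\zt) + \rbp(B - \atU)$, which dominates the best the $h_b$-branch can do on $\{\at \ge B\}$ (namely $\utc(B;\zt) = h_a(B) \le h_a(\atU)$), so $\ats = \atU$. In the intermediate regime $\atU < B < \atB$ the comparison is between $h_a(\atU) = \utc(\atU;\zt) + \rbp(B - \atU)$ and $h_b(\atB) = \utc(\atB;\zt)$, and $h_a(\atU) \ge h_b(\atB)$ precisely when $B \ge \tfrac{\utc(\atB;\zt) - \utc(\atU;\zt)}{\rbp} + \atU = \Bmath$. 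Collecting the regimes --- and noting $\atU \le \Bmath$ (since $\utc(\atB;\zt) \ge \utc(\atU;\zt)$) and that $B \ge \atB$ implies $B \ge \Bmath$ --- gives exactly the threshold rule (\ref{eq:ats_thresholdPolicy}) with threshold (\ref{eq:Bth_thresholdBaseline}).

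I expect the main difficulty to be organizational rather than computational: since $h$ is genuinely non-concave, the two concave branches must be tracked separately, the three regimes (and the boundary cases where $\atB$ or $\atU$ hits $0$ or $\amax$, or where the maximizer is non-unique, which is why the statement is phrased with $\in$) must be shown to be exhaustive, and one must verify that they all collapse onto the single scalar threshold $\Bmath$. The one small fact that is reused throughout and should be stated explicitly is $\atU \le \atB$; everything else is bookkeeping around the kink of $h$ at $\at = B$.
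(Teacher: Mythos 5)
Your proposal is correct and follows essentially the same route as the paper: drop the continuation term (constant in $\at$ on DR days since $\xbmtone = \xbmt$), split the objective at the kink $\at = \Bma(\xbmt)$ into the two concave branches with maximizers $\atU$ and $\atB$, establish $\atU \le \atB$, and locate the threshold by comparing the two branch optima. The only difference is organizational --- you run an explicit three-regime case analysis on the position of $B$ relative to $\atU$ and $\atB$, whereas the paper tracks the two branch values $\Vtn(\Bma)$ and $\Vtp(\Bma)$ as monotone functions of $\Bma$ and finds their crossing; the one step you assert but should justify (as the paper does via its inequality $\utc(\atB)-\utc(\atU) < \rbp(\atB - \atU)$, which follows from concavity and the first-order conditions $\dutc(\atB)=0$, $\dutc(\atU)=\rbp$) is that $\Bmath \le \atB$, needed to collapse your regimes onto the single threshold.
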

The above theorem clearly shows that a rational customer will reduce their consumption only if the calculated baseline reaches a threshold. Otherwise, the customer remains the intrinsic consumption.

The following two lemmas give support to the derivation of structural results for the optimal policy on non-DR days.
\begin{lemma}[Supermodularity composition rule]
\label{lma:compsition_supermodularity}
If a function $f: \Rbb \to \Rbb$ is convex and increasing (decreasing) and another function $g:\Rbbn \to \Rbb$ is \textbf{monotonic} and supermodular (submodular), then $f(g(\xbm))$ in supermodular in $\xbm \in \Rbbn$.
\end{lemma}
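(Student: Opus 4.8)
\textit{Proof proposal.}
The plan is to unfold the definition of supermodularity for the composite $h \coloneqq f \circ g$ and reduce everything to a one-dimensional convexity estimate. Fix arbitrary $\xbm, \xbmd \in \Rbbn$ and abbreviate the four relevant values of $g$ by $\alpha = g(\xbm \wedge \xbmd)$, $\beta = g(\xbm)$, $\gamma = g(\xbmd)$, $\delta = g(\xbm \vee \xbmd)$; the claim to be proved is precisely $f(\beta) + f(\gamma) \le f(\alpha) + f(\delta)$. First I would dispense with the ``decreasing $f$, submodular $g$'' alternative by the substitution $g \mapsto -g$ and $f \mapsto f(-\,\cdot\,)$: this preserves convexity of $f$, preserves monotonicity of $g$, and exchanges the two hypothesis patterns, so it is enough to treat $f$ convex and increasing and $g$ monotone and supermodular.

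Next I would use the monotonicity of $g$ to order the four values. If $g$ is increasing, then $\xbm \wedge \xbmd \le \xbm, \xbmd \le \xbm \vee \xbmd$ gives $\alpha \le \beta, \gamma \le \delta$; relabelling so that $\beta \le \gamma$ (harmless, since $h(\xbm) + h(\xbmd)$ and $h(\xbm \vee \xbmd) + h(\xbm \wedge \xbmd)$ are both symmetric under swapping $\xbm \leftrightarrow \xbmd$) yields the chain $\alpha \le \beta \le \gamma \le \delta$, and the supermodularity of $g$, $\beta + \gamma \le \alpha + \delta$, rewrites as $\beta - \alpha \le \delta - \gamma$. (If $g$ is decreasing the same steps give the mirror-image chain $\delta \le \beta \le \gamma \le \alpha$ with $\gamma - \delta \le \alpha - \beta$, and the estimate below goes through verbatim with the roles of $\alpha$ and $\delta$ interchanged.) So we are left with two intervals $[\alpha,\beta]$ and $[\gamma,\delta]$ lying in order on the line, the left one no longer than the right one, and the task is to show the increment of $f$ over the left interval does not exceed the increment over the right one.

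The core step is the standard slope-monotonicity of convex functions: for $f$ convex and $\alpha \le \gamma$, $\beta \le \delta$ with $\alpha < \beta$, $\gamma < \delta$, one has $\tfrac{f(\beta)-f(\alpha)}{\beta-\alpha} \le \tfrac{f(\delta)-f(\gamma)}{\delta-\gamma}$. Multiplying through by the length $\beta - \alpha \ge 0$ and then replacing it by the larger length $\delta - \gamma$ — which is where increasingness of $f$ enters, as it guarantees $\tfrac{f(\delta)-f(\gamma)}{\delta-\gamma} \ge 0$ — gives $f(\beta) - f(\alpha) \le (\beta-\alpha)\tfrac{f(\delta)-f(\gamma)}{\delta-\gamma} \le (\delta-\gamma)\tfrac{f(\delta)-f(\gamma)}{\delta-\gamma} = f(\delta) - f(\gamma)$, that is, $f(\beta) + f(\gamma) \le f(\alpha) + f(\delta)$, as required. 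The degenerate cases $\beta = \alpha$ or $\gamma = \delta$ are immediate: if $\beta = \alpha$ the left increment is zero while $f(\delta) - f(\gamma) \ge 0$ by monotonicity, and $\gamma = \delta$ forces $\alpha = \beta$ via $\beta + \gamma \le \alpha + \delta$, reducing to the previous case.

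The proof has no deep obstacle; the care is entirely bookkeeping. The two points that must not be conflated are: (i) the slope-monotonicity inequality is valid only after the intervals have been placed in the correct order, which is why the symmetry reduction $\beta \le \gamma$ and the resulting chain $\alpha \le \beta \le \gamma \le \delta$ must be set up before invoking convexity; and (ii) monotonicity of $f$ — distinct from convexity — is genuinely used, being exactly what licenses trading the length $\delta - \gamma$ for $\beta - \alpha$ in the final chain of inequalities. No smoothness, compactness, or finiteness beyond the stated hypotheses is needed.
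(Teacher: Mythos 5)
Your argument is correct. Note first that the paper itself supplies no proof of this lemma: it simply defers to the cited reference \cite{simchi2005logic}, so there is no in-paper argument to compare against. Your write-up is the standard self-contained proof of the composition rule, and every step checks out: the symmetry reduction $g\mapsto -g$, $f\mapsto f(-\,\cdot\,)$ correctly converts the decreasing/submodular alternative into the increasing/supermodular one while leaving the composite $f\circ g$ unchanged; monotonicity of $g$ is exactly what orders the four values $\alpha=g(\xbm\wedge\xbmd)$, $\beta=g(\xbm)$, $\gamma=g(\xbmd)$, $\delta=g(\xbm\vee\xbmd)$ into a chain on the line (and the lemma is false without it, so it is right that you invoke it explicitly); supermodularity of $g$ delivers the length comparison $\beta-\alpha\le\delta-\gamma$; and the four-point slope monotonicity of convex functions, combined with nonnegativity of the slope coming from increasingness of $f$, yields $f(\beta)-f(\alpha)\le f(\delta)-f(\gamma)$, which is the supermodularity inequality for $f\circ g$. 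You also correctly dispose of the degenerate intervals and of the case where $g$ is decreasing (where the chain reverses but the target inequality is symmetric in $\alpha$ and $\delta$). Your explicit identification of where each hypothesis enters --- convexity for the slope comparison, increasingness of $f$ for trading the shorter length for the longer one, monotonicity of $g$ for the ordering --- is a genuine improvement in transparency over the paper's bare citation.
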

The proof of this lemma can be found in \cite{simchi2005logic}. 
The following lemma is helpful in establishing comparative static results when the objective function is not differentiable \cite{topkis1978minimizing}.
\begin{lemma}[Topkis's theorem]
\label{lma:Topkis}
If $f$ is supermodular in $(\xbm, \vartheta)$, and $\Gcal$ is a lattice, then $\xbm^{*}(\vartheta) \in \argmax_{\xbm \in \Gcal} {f(\xbm, \vartheta)}$
is increasing in $\vartheta$. If $f$ is submodular in $(\xbm, \vartheta)$, then $\xbm(\vartheta)$ is decreasing in $\vartheta$. 
\end{lemma}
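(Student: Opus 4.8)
The plan is to reproduce the classical Topkis comparative-statics argument. First I would collapse the two cases into one: if $f$ is submodular in $(\xbm,\vartheta)$, then $f$ becomes supermodular in $(\xbm,\vartheta)$ once the order on the parameter component is reversed (equivalently, replace $\vartheta$ by $-\vartheta$ when the parameter lives in $\Rbb$ or $\Rbbn$), so applying the supermodular conclusion in the reversed order says exactly that $\xbm^{*}(\vartheta)$ is decreasing in $\vartheta$. Hence it suffices to prove the supermodular case: if $f$ is supermodular in $(\xbm,\vartheta)$ on the product lattice and $\Gcal$ is a lattice, then the maximizer correspondence is monotone.

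Next I would fix $\vartheta_1 \le \vartheta_2$, take any $\xbm_1 \in \argmax_{\xbm \in \Gcal} f(\xbm,\vartheta_1)$ and $\xbm_2 \in \argmax_{\xbm \in \Gcal} f(\xbm,\vartheta_2)$, and observe that $\xbm_1 \vee \xbm_2$ and $\xbm_1 \wedge \xbm_2$ are feasible because $\Gcal$ is a lattice. Applying the supermodularity inequality (\ref{eq:supermodularity}) to the two product-lattice points $(\xbm_1,\vartheta_1)$ and $(\xbm_2,\vartheta_2)$, and using $\vartheta_1 \vee \vartheta_2 = \vartheta_2$ and $\vartheta_1 \wedge \vartheta_2 = \vartheta_1$, yields $f(\xbm_1,\vartheta_1) + f(\xbm_2,\vartheta_2) \le f(\xbm_1 \vee \xbm_2,\vartheta_2) + f(\xbm_1 \wedge \xbm_2,\vartheta_1)$. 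On the other hand, optimality of $\xbm_2$ at $\vartheta_2$ gives $f(\xbm_1 \vee \xbm_2,\vartheta_2) \le f(\xbm_2,\vartheta_2)$ and optimality of $\xbm_1$ at $\vartheta_1$ gives $f(\xbm_1 \wedge \xbm_2,\vartheta_1) \le f(\xbm_1,\vartheta_1)$; adding these two and comparing with the previous display forces every inequality in the chain to be an equality. In particular $\xbm_1 \vee \xbm_2$ is a maximizer at $\vartheta_2$ and $\xbm_1 \wedge \xbm_2$ is a maximizer at $\vartheta_1$, which is monotonicity of the $\argmax$ correspondence in the strong set order.

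Finally I would turn this into the pointwise statement quoted in the lemma. When the maximizer is unique for every parameter, $\xbm^{*}(\vartheta_2) = \xbm_1 \vee \xbm_2 \ge \xbm_1 = \xbm^{*}(\vartheta_1)$ and we are done; in the set-valued case I would instead take $\xbm^{*}(\vartheta)$ to be the greatest (or least) selection of the argmax set, so that $\xbm^{*}(\vartheta_1)\vee\xbm^{*}(\vartheta_2)$ is feasible and optimal at $\vartheta_2$, hence $\xbm^{*}(\vartheta_1)\vee\xbm^{*}(\vartheta_2) \le \xbm^{*}(\vartheta_2)$, i.e.\ $\xbm^{*}(\vartheta_1) \le \xbm^{*}(\vartheta_2)$ (and symmetrically for the least selection). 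The hard part is not the algebra — the supermodularity/optimality squeeze is a couple of lines — but the bookkeeping around set-valued maximizers: stating precisely what ``$\xbm^{*}(\vartheta)$ is increasing'' means, and invoking the implicit regularity that makes the argmax nonempty and its extreme selections well defined (for instance $\Gcal$ compact and $f(\cdot,\vartheta)$ upper semicontinuous). Since the result is standard, in the paper I would simply cite \cite{topkis1978minimizing} rather than spell this out.
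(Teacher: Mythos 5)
Your argument is correct and is precisely the classical Topkis comparative-statics proof: reduce the submodular case to the supermodular one by reversing the order on $\vartheta$, then combine the supermodularity inequality on the product lattice with optimality of $\xbm_1$ and $\xbm_2$ to force $\xbm_1\vee\xbm_2$ to be optimal at the larger parameter and $\xbm_1\wedge\xbm_2$ at the smaller. The paper does not prove this lemma at all---it simply cites \cite{topkis1978minimizing}---so there is no alternative route to compare against; your write-up supplies exactly the standard argument from that reference, and your closing caveat about what ``increasing'' means for a set-valued $\argmax$ (strong set order, or extreme selections) is a point the paper's statement silently glosses over.
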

 
\begin{theorem}
\label{thm:ats_nonDR_supermodular}
The optimal policy on a non-DR day, $\ats(\xbmt, 0, \zt)$, is increasing in $\xbmt$ if the average baseline $\Bma(\xbmt)$ is supermodular in $\xbmt$.
\end{theorem}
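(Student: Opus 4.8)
The plan is to reduce the statement to Topkis's theorem (Lemma~\ref{lma:Topkis}) by first establishing, via backward induction on $t$, the auxiliary claim that $\Vt(\xbmt,\yt,\zt)$ is supermodular in $\xbmt$ for every fixed $\yt$ and $\zt$. The base case $t=T$ is immediate since $V_T\equiv 0$. For the inductive step I would fix $\zt$ and split on the value of $\yt$, since the two branches of the transition~(\ref{eq:x_transition}) behave differently.

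On a DR day ($\yt=1$) the consumption vector is frozen, so the continuation value does not depend on $\at$ and
\[
\Vt(\xbmt,1,\zt)=\Ebb_{\ytone,\ztone\mid\yt=1}\!\big[\Vtone(\xbmt,\ytone,\ztone)\big]+\max_{\at\in[0,\amax]}\big\{\utcaz+\rbp[\Bma(\xbmt)-\at]_+\big\}.
\]
The first summand is a convex combination of functions supermodular in $\xbmt$ (induction hypothesis), hence supermodular. For the second, put $h(b)\coloneqq\max_{\at}\{\utcaz+\rbp[b-\at]_+\}$; each hinge $b\mapsto\utcaz+\rbp[b-\at]_+$ is convex and increasing, so $h$ is convex and increasing as a pointwise maximum, and since $\Bma$ is increasing (monotonicity of the average methods) and supermodular by hypothesis, Lemma~\ref{lma:compsition_supermodularity} makes $h(\Bma(\xbmt))$ supermodular; a sum of supermodular functions is supermodular. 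On a non-DR day ($\yt=0$) there is no rebate and the vector shifts, so $\Vt(\xbmt,0,\zt)=\max_{\at\in[0,\amax]}F(\at,\xbmt)$ with $F(\at,\xbmt)\coloneqq\utcaz+\Ebb_{\ytone,\ztone\mid\yt=0}[\Vtone((\at,x_{t,1},\dots,x_{t,Y-1}),\ytone,\ztone)]$. Here $\utcaz$ depends on $\at$ alone, hence contributes no cross terms and is jointly supermodular; the map $(\at,\xbmt)\mapsto(\at,x_{t,1},\dots,x_{t,Y-1})$ merely selects $Y$ of the $Y+1$ coordinates, and composing a supermodular function with such a coordinate selection and then taking expectation over $(\ytone,\ztone)$ preserves supermodularity, so $F$ is supermodular in $(\at,\xbmt)$ on the product lattice $[0,\amax]\times\RbbY$. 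Partial maximization over $\at$ then yields a supermodular function of $\xbmt$: choosing $\at_1,\at_2$ optimal for $\xbmt^1,\xbmt^2$, one chains $\Vt(\xbmt^1,0,\zt)+\Vt(\xbmt^2,0,\zt)=F(\at_1,\xbmt^1)+F(\at_2,\xbmt^2)\le F(\at_1\vee\at_2,\xbmt^1\vee\xbmt^2)+F(\at_1\wedge\at_2,\xbmt^1\wedge\xbmt^2)\le\Vt(\xbmt^1\vee\xbmt^2,0,\zt)+\Vt(\xbmt^1\wedge\xbmt^2,0,\zt)$, using supermodularity of $F$ and that the max is attained by compactness of $[0,\amax]$. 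This closes the induction.

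Given the claim, the theorem is immediate: on a non-DR day the objective in~(\ref{eq:ats_optimalPolicy}) is precisely $F(\at,\xbmt)$, which is supermodular in $(\at,\xbmt)$, so Lemma~\ref{lma:Topkis} with decision variable $\at\in[0,\amax]$ and parameter $\xbmt$ shows that $\ats(\xbmt,0,\zt)$ is increasing in $\xbmt$. The main obstacle is the inductive step: one must transport the supermodularity of $\Vtone$ correctly through the two distinct transitions, verify that partial maximization over $\at$ preserves supermodularity (a small lemma that needs the maximum to be attained), and confirm the hypotheses of Lemma~\ref{lma:compsition_supermodularity}---in particular that $\Bma$ is both monotone and supermodular, the latter being exactly the standing assumption here (and the reason the conclusion can fail for HighXofY, which is submodular by Proposition~\ref{pop:modularity}).
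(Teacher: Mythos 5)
Your proof is correct and follows essentially the same route as the paper: a backward induction establishing supermodularity of $\Vt$ in $\xbmt$, the composition rule of Lemma~\ref{lma:compsition_supermodularity} for the DR-day branch, preservation of supermodularity under expectation and partial maximization over $\at$ for the non-DR branch, and finally Topkis's theorem applied to the supermodular objective in $(\at,\xbmt)$. The only substantive difference is that you obtain convexity and monotonicity of the DR-day value in $\Bma(\xbmt)$ as a pointwise maximum of convex increasing hinges, whereas the paper reads it off the explicit threshold policy of Theorem~\ref{thm:ats_threshold}; both are valid, and your write-up of the partial-maximization and coordinate-selection steps is in fact more careful than the paper's.
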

According to proposition \ref{pop:modularity}, only the LowXofY baseline method $\Lxy(\xbmt, X)$ (includes HighYofY) is supermodular in $\xbmt$. This implies that customers tend to consume more on non-DR days with higher recent consumption. 
The characteristics of HighXofY method and the nonnegative rebate form fail us to have a concave optimization problem, which brings difficulties in establishing explicit structural results. This motivates us to formulate approximate model and method to establish more structural results or to obtain some numerical solution.

\section{The Approximated Baseline Method}
The impact from program parameters, i.e., $X$ and $Y$, on the optimal policies is little-known under the average method. In addition, how the distribution of the customer's consumption affects the calculated baseline is still unclear. 
To address these issues, we propose an approximated baseline method that takes the program parameters explicitly under which more structural results are revealed.

\subsection{Approximation of Baseline Method}
We use the sample mean and extreme order values to formulate an approximation of the baseline. For convenience, we remove the index $t$ and let $\Xbfcal = (\Xcal_{1}, \ldots, \Xcal_{Y})$ be the recent consumption vector consisting of $Y$ independent random variables. To define the approximation of HighXofY, we introduce the largest order statistic (or sample maximum) $\Xmax = \max \{\Xcal_{1}, \ldots, \Xcal_{Y}\}$ and
the sample mean $\Xbar = (1/X) (\Xcal_{1} + \cdots + \Xcal_{Y})$. 
The approximated HighXofY is essentially the convex combination of the sample mean $\Xbar$ and sample maximum $\Xmax$, given by
\begin{equation}
    \label{eq:Hxy_appr}
    \Haxy(\Xbfcal, X, Y) = \lambda \Xbar + (1-\lambda) \Xmax,
\end{equation}
where $\lambda= \frac{X-1}{Y-1} \in [0,1]$ is the weight of the sample mean. 
Clearly, when $\lambda = 0\;(X=1)$ we have $\Haxy(\Xcal, X, Y) = \Xmax$, and $\Haxy(\Xcal, X, Y) = \Xbar$ while $\lambda = 1\;(X=Y)$. This also shows $\Haxy (\Xcal, X, Y) = \Hxy (\Xcal, X, Y)$ under these two cases.
Similarly, after introducing the sample minimum $\Xmin = \min\{\Xcal_{1}, \ldots, \Xcal_{Y}\}$ and sample median $\Xmid$, one can define the approximated LowXofY and MidXofY as $\Laxy (\Xbfcal, X, Y) = \lambda \Xbar + (1-\lambda) \Xmin$ and $\Maxy (\Xbfcal, X, Y) = \lambda \Xbar + (1-\lambda) \Xmid$, respectively.

The $\Xmax$ in (\ref{eq:Hxy_appr}), however, prevents us deriving structural results since $\Xmax$ is essentially the High1ofY method. Therefore, we come up with an approximation for the sample maximum $\Xmax$.
Reference \cite{chen1999accurate} gives an effective formula to estimate the sample maximum from the normal distribution. 
Suppose the standard normal distribution $\Nma(0,1)$ is truncated in the interval $[\alpha, \beta]$, denoted by $\Nma(0,1,\alpha, \beta)$. Then we have the estimation of sample maximum for $\Nma(0,1,\alpha, \beta)$ as
\begin{equation}
    \label{eq:xmax_appr_trun}
    \fbm(Y) = \Phit^{-1}[(0.5+\ept)^{1/Y}],
\end{equation}
where $\Phit{(\cdot)}$ is the cumulative distribution function (CDF) of the truncated normal distribution, $\ept$ is a parameter calculated by a bisectional searching algorithm that optimizes the match of (\ref{eq:xmax_appr_trun}) to the expected values \cite{chen1999accurate}. 
The customer consumption is assumed to obey a truncated normal distribution $\Nma (\mu, \sigma, \asf, \bsf)$, which is converted into the corresponding standard normal distribution with truncation $\Nma (0,1,\alpha, \beta)$, with $\alpha = \frac{\asf-\mu}{\sigma}$ and $\beta = \frac{\bsf-\mu}{\sigma}$.
Then the estimation of the sample maximum is $\hatXmax \approx  \mu + \fbm(Y) \sigma$.
For a specific sample $\xbmt$, one can estimate the sample maximum $x_{t,\max} = \max_{1 \le i \le Y} \{x_{t,1}, \ldots, x_{t,Y}\}$ using
\begin{equation}
    \label{eq:xmax_sample_mean_variance}
    \hat {x}_{t,\max} = \xbmtbar + \fbm(Y) \st,
\end{equation}
where $\xbmtbar = \frac{1}{Y}(x_{t,1}+ \cdots + x_{t,Y})$ is the sample mean and $\st = \sqrt{\stsqr}$ is the standard deviation with $\stsqr = \frac{1}{Y-1} \mathop{\sum}_{i=1}^{Y} {(\xti - \xbmtbar)^{2}}$. 
Then, based on (\ref{eq:Hxy_appr}), one can obtain the final approximated HighXofY in terms of the recent consumption $\xbmt$, $X$, and $Y$ as 
\begin{equation}
    \label{eq:Hxy_final}
    \Haxy(\xbmt, X, Y) = \xbmtbar + \frac{Y-X}{Y-1} \fbm(Y) \st,\; \forall\;Y \ge 2.
\end{equation}

To quantify the goodness-of-fit of this approximation, we define the percent error of approximation as $(\Bma - \Bhat)/{\Bhat} \times 100\%$,
where $\Bhat$ is the approximated baseline. The $\Bhat$ is used as the denominator because it provides a more stable reference for illustration than the actual baseline. To show the feasibility of this approximation, we use both real data and synthetic data (with additive white Gaussian noise, which can be found in the case study), where the results are shown in Fig. \ref{fig:app_err}. 
Although there are some outliers, the absolute approximation errors are less than 5\%. 
These results show that the proposed approximated baseline method is effective.
\begin{figure}[!t]
    \centering
    \includegraphics[width=0.95\linewidth]{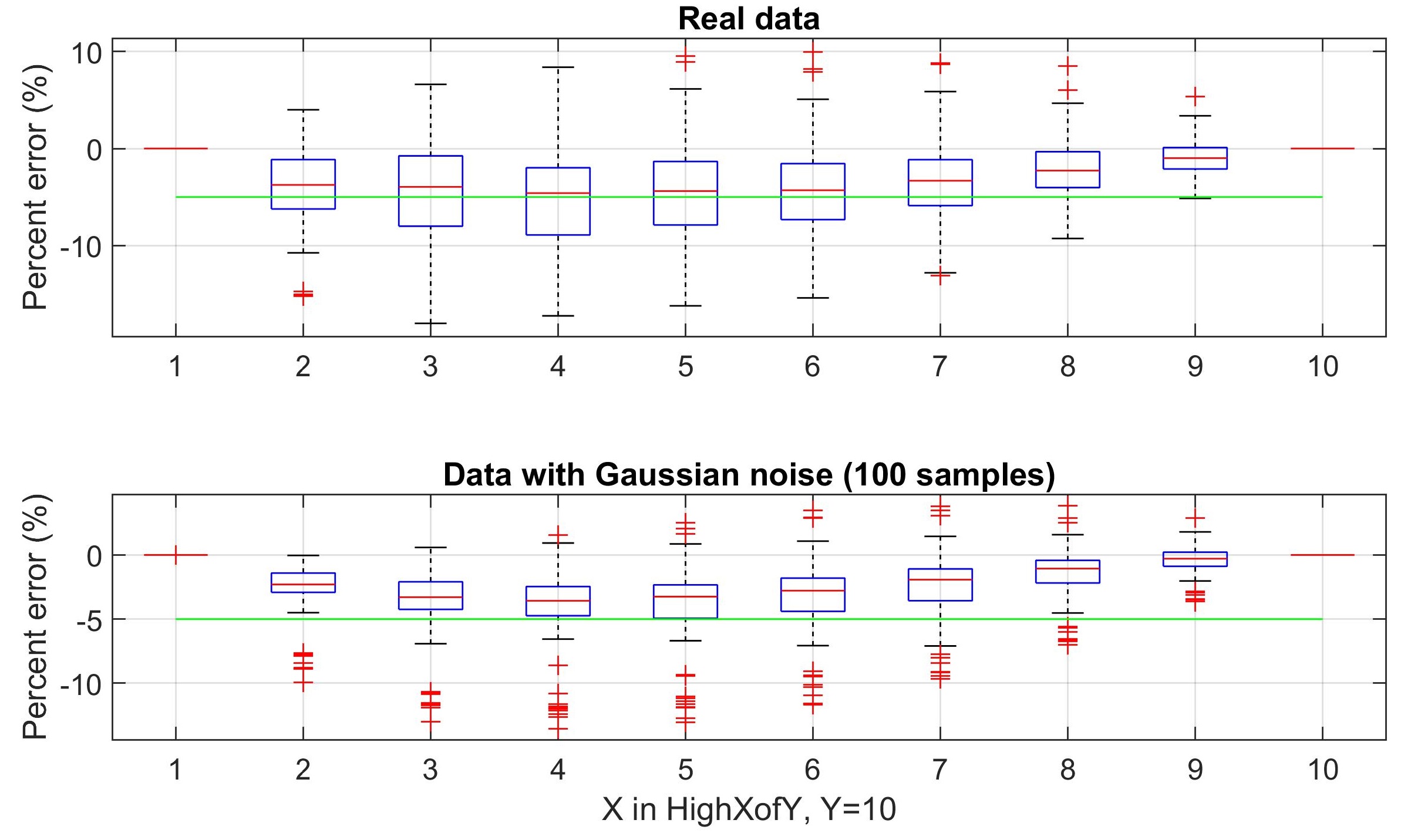}
    \caption{The approximation error for the approximated HighXofY methods (the green line indicates the -5\% error).}
    \label{fig:app_err}
\vspace{-3mm}
\end{figure}

\subsection{Structural Results under the Approximated Baseline}
\subsubsection{The monotonicity of the $\Haxy(\xbmt, X, Y)$} We first show the monotonicity of the approximated baseline in its parameters in the following proposition.
\begin{proposition}
\label{prop:Haxy_monotonicity}
The $\Haxy(\xbmt, X, Y)$ is increasing in $\xbmt$, but decreasing in $X$. The expectation $\Ebb[\Haxy]$ is increasing in $\mu$ and $\sigma$. The monotonicity of $\Haxy(\xbmt, X, Y)$ in $Y$ is underdetermined but relies on the selection of $X$.
\end{proposition}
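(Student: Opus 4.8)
\emph{Proof plan.} All four assertions can be read off the closed form \eqref{eq:Hxy_final}, $\Haxy(\xbmt,X,Y)=\xbmtbar+\tfrac{Y-X}{Y-1}\fbm(Y)\,\st$, once two auxiliary facts about the maximum estimate are in hand: (i) $\fbm(Y)\ge 0$, and (ii) $\fbm(Y)$ is increasing in $Y$. Fact~(i) follows from the construction in \cite{chen1999accurate}: for $x\in(0,1)$ and $Y\ge 2$ one has $x^{1/Y}\ge x$, so $(0.5+\ept)^{1/Y}\ge 0.5+\ept\ge\Phit(0)$ whenever $\ept\ge 0$ and the truncation is centered, whence $\fbm(Y)=\Phit^{-1}\!\big[(0.5+\ept)^{1/Y}\big]\ge 0$; fact~(ii) is the monotonicity of the expected sample maximum of an i.i.d.\ sample in the sample size, which the estimator is built to reproduce. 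These are the facts I would isolate first.

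Given (i)--(ii), the $X$-monotonicity is immediate: as the integer $X$ increases on $\{1,\dots,Y\}$ the weight $\tfrac{Y-X}{Y-1}$ strictly decreases while $\fbm(Y)\st\ge 0$ is untouched, so $\Haxy$ decreases in $X$. For monotonicity in $\xbmt$ I would \emph{not} differentiate \eqref{eq:Hxy_final} directly --- $\st$ is not monotone in $\xbmt$ --- but argue from the convex-combination form \eqref{eq:Hxy_appr}, $\Haxy=\lambda\xbmtbar+(1-\lambda)\hatXmax$ with $\lambda=\tfrac{X-1}{Y-1}\in[0,1]$: the sample mean is a nonnegative combination of the coordinates and the sample maximum is increasing, so any convex combination of the two is increasing, and \eqref{eq:Hxy_final} inherits this since \eqref{eq:xmax_sample_mean_variance} is the numerical surrogate for $\Xmax$ on the parameter ranges of interest.

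For $\Ebb[\Haxy]$, take expectations in \eqref{eq:xmax_sample_mean_variance} using $\Ebb[\xbmtbar]=\mu$ (the truncation being effectively symmetric about $\mu$) and $\hatXmax\approx\mu+\fbm(Y)\sigma$, to get $\Ebb[\Haxy]=\mu+\tfrac{Y-X}{Y-1}\fbm(Y)\,\sigma$. Differentiating in $\mu$ with the standardized truncation $\alpha,\beta$ held fixed gives $1>0$, and differentiating in $\sigma$ gives $\tfrac{Y-X}{Y-1}\fbm(Y)\ge 0$ by (i) and $X\le Y$; both are therefore increasing. The $Y$-monotonicity is then visibly a tug-of-war: $\fbm(Y)$ is increasing by (ii), whereas $\partial_Y\tfrac{Y-X}{Y-1}=\tfrac{X-1}{(Y-1)^2}$ depends on how $X$ is chosen --- if $X$ is held fixed the weight is nondecreasing and $\Ebb[\Haxy]$ rises with $Y$, but if $X$ tracks $Y$ (e.g.\ $X=Y$, a HighYofY rule, where the weight is identically $0$ and $\Ebb[\Haxy]=\mu$ for every $Y$, or $X=Y-k$ with $k$ fixed, where the weight $\tfrac{k}{Y-1}$ decreases) the product need not be monotone; for the realized $\Haxy(\xbmt,X,Y)$ there is the further data-dependent change in $\xbmtbar$ and $\st$ when the window lengthens. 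Hence the $Y$-monotonicity cannot be signed without fixing the $X$--$Y$ relation.

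The main obstacle, I expect, is the bookkeeping around $\fbm(Y)$: proving $\fbm(Y)\ge 0$ and its $Y$-monotonicity cleanly from the bisection-defined $\ept$ and the truncated-normal quantile, and --- for the $\Ebb[\Haxy]$ claim --- controlling the residual dependence of $\fbm(Y)$ on $(\mu,\sigma)$ through the truncation endpoints $\alpha=\tfrac{\asf-\mu}{\sigma}$, $\beta=\tfrac{\bsf-\mu}{\sigma}$; the location--scale reparametrization makes the leading terms behave as claimed, and I would argue the endpoint drift is a second-order correction that does not reverse the sign.
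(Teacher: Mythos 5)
Your proposal is correct and follows essentially the same route as the paper, which in fact gives no formal proof of this proposition beyond reading the $\xbmt$-, $X$-, $\mu$-, and $\sigma$-monotonicities off the closed form $\Haxy(\xbmt,X,Y)=\xbmtbar+\tfrac{Y-X}{Y-1}\fbm(Y)\st$ (using $\fbm(Y)\ge 0$) and the same four-case discussion of how the weight $\tfrac{Y-X}{Y-1}$ and the sublinear, increasing $\fbm(Y)$ compete as $Y$ grows. The two points where you are more careful than the paper --- noting that $\st$ is not coordinatewise monotone in $\xbmt$, so the literal formula can fail to increase when a coordinate far below the sample mean is raised and one must instead appeal to the convex-combination form with the exact sample maximum, and noting that $\fbm(Y)$ drifts with $(\mu,\sigma)$ through the truncation endpoints --- are real caveats that the paper silently elides, and are worth retaining.
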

For the monotonicity of $\Haxy$ in $Y$, we give the analysis of four selections of $X$. 
(i) $X=1:$ this shows that $\Haxy = \xbmtbar + \fbm(Y) \st$ is increasing in $Y$. 
(ii) When $\frac{Y-X}{Y-1} \approx \xi,\;\xi \in (0,1)$ (we can take $X = \left \lfloor (1-\xi)Y + \xi \right \rfloor$), then $\Haxy \approx \xbmtbar + \xi \fbm(Y)\st$ is increasing in $Y$. 
(iii) When $X=Y-K$, $K\in \mathbb{Z}^{+},\;K<Y$, then $\Haxy = \xbmtbar + \frac{K}{Y-1} \fbm(Y)$ is decreasing in $Y$ since $\fbm(Y)$ is sublinear in $Y$.
(iv) When $X=Y$, $\Haxy = \xbmtbar$ is independent of $Y$. 
The monotonicity of $\Haxy$ in $Y$ can be illustrated with a simple example. The calculated baselines $\{\hat {\mathcal{H}}_4^2$, $\hat {\mathcal{H}}_6^3$, $\hat {\mathcal{H}}_8^4$, $\hat {\mathcal{H}}_{10}^5\}$ are increasing in $Y$, while $\{\hat {\mathcal{H}}_3^2$, $\hat {\mathcal{H}}_5^4$, $\hat {\mathcal{H}}_8^7$, $\hat {\mathcal{H}}_{10}^9\}$ are decreasing in $Y$. 

\subsubsection{The customer's manipulation}
\begin{proposition}
The approximated HighXofY, $\Haxy(\xbmt)$, is submodular in $\xbmt$.
\end{proposition}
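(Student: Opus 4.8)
The plan is to reduce $\Haxy$ to its single non‑modular ingredient. By (\ref{eq:Hxy_final}), $\Haxy(\xbmt,X,Y)=\xbmtbar+\kappa\,\st$, where $\kappa:=\frac{Y-X}{Y-1}\fbm(Y)\ge 0$ is a constant not depending on $\xbmt$ (we take $Y\ge 2$). The sample mean $\xbmtbar=\frac1Y\sum_i\xti$ is linear, hence modular, and by Definition~\ref{def:modularity} submodularity is preserved under adding a modular function and under multiplication by a nonnegative scalar; so the claim is equivalent to: the sample standard deviation $\st=\sqrt{\stsqr}$ is submodular in $\xbmt$. One is tempted to instead read submodularity directly off (\ref{eq:Hxy_appr}), where $\Haxy$ is a convex combination of the modular sample mean $\Xbar$ and the sample maximum $\Xmax$ (submodular, being High$1$of$Y$); but the \emph{effective} approximation used in (\ref{eq:Hxy_final}) replaces $\Xmax$ by $\xbmtbar+\fbm(Y)\st$, which is not an order statistic, so that shortcut does not close the argument and the $\st$ term must be treated on its own.

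Next I would invoke the second‑order characterization of submodularity. Put $u_i:=\xti-\xbmtbar$ and $Q:=\sum_{i=1}^Y u_i^2=(Y-1)\stsqr$; then $\xbmt\mapsto\st$ is $C^2$ on the open set $\{Q>0\}$, whose complement (the diagonal $x_{t,1}=\cdots=x_{t,Y}$) is negligible and handled by continuity. On $\{Q>0\}$, $\st$ is submodular iff $\partial^2\st/(\partial\xti\,\partial x_{t,j})\le 0$ for all $i\ne j$. Using $\partial Q/\partial\xti=2u_i$, a routine differentiation yields, for $i\ne j$,
\begin{equation}
    \label{eq:st_mixed_partial}
    \frac{\partial^2\st}{\partial\xti\,\partial x_{t,j}}
      =-\,\frac{1}{\sqrt{Y-1}\;Q^{3/2}}\left(\frac{Q}{Y}+u_iu_j\right),
\end{equation}
so the proposition reduces to the elementary inequality
\begin{equation}
    \label{eq:st_submod_ineq}
    \sum_{k=1}^{Y}u_k^2\;\ge\;-\,Y\,u_iu_j \qquad (i\ne j).
\end{equation}

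I expect (\ref{eq:st_submod_ineq}) to be the real obstacle rather than a formality. It is automatic whenever $u_i$ and $u_j$ have the same sign (or one vanishes), and it holds with equality for $Y=2$ (there $u_1=-u_2$ and both sides equal $2u_1^2$), which already settles High$Y$of$Y$ and every High$X$of$2$ case; note also that submodularity of $\st=\sqrt{\stsqr}$ does \emph{not} follow from Lemma~\ref{lma:compsition_supermodularity} applied to the (submodular) variance $\stsqr$, because $\stsqr$ is not monotone in $\xbmt$. For $Y\ge 3$, a recent profile with one day well above the mean, one well below, and the rest near the mean makes $-Yu_iu_j$ exceed $\sum_k u_k^2$, so $\st$ is only \emph{locally} submodular there and an unrestricted proof for the standard‑deviation form is not available. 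The plan then branches: (i) prove the statement on the natural state space of regular recent consumptions, i.e.\ under an explicit no‑dominant‑outlier condition on $\xbmt$ that guarantees (\ref{eq:st_submod_ineq}), then conclude by restoring $\xbmtbar$ and $\kappa$ (both submodularity‑preserving by Definition~\ref{def:modularity}); or (ii) if an unconditional result is preferred, approximate the extreme‑order‑statistic term by the \emph{sample variance} $\stsqr=\frac1{Y-1}\sum_i\xti^2-\frac1{Y(Y-1)}\left(\sum_i\xti\right)^2$ instead of $\st$, whose Hessian is constant with off‑diagonal entries $-2/\left(Y(Y-1)\right)<0$ and which is therefore submodular with no restriction whatsoever. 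I would pursue route (i), making the regularity condition on $\xbmt$ explicit, since it retains the extreme‑value rationale behind the approximation while keeping the argument rigorous.
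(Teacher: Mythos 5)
Your reduction of the claim to the submodularity of $\st$ is exactly where the paper's own proof goes, but from there the two arguments diverge sharply, and yours is the more careful one. The paper disposes of the remaining step in a single sentence by citing the composition rule of Lemma~\ref{lma:compsition_supermodularity}; you correctly observe that this lemma does not apply, because its hypothesis requires the inner function to be monotone and the sample variance $\stsqr$ is not monotone in $\xbmt$ (each coordinate's contribution decreases and then increases around the leave-one-out mean). Your second-order computation is correct: with $u_k=\xti|_{i=k}-\xbmtbar$ and $Q=\sum_k u_k^2$, the mixed partial of $\st$ for $i\neq j$ has the sign of $-(Q/Y+u_iu_j)$, so submodularity requires $\sum_k u_k^2\ge -Y\,u_iu_j$ for every pair, and this genuinely fails for $Y\ge 3$: at $\xbmt=(3,1,2)$ one has $u=(1,-1,0)$, $Q=2<3=-Yu_1u_2$, and the mixed partial equals $+1/12$, so $\st$ --- and hence $\Haxy$ in (\ref{eq:Hxy_final}) for any $X<Y$, since $\xbmtbar$ is modular and $\frac{Y-X}{Y-1}\fbm(Y)$ is a positive constant --- is strictly \emph{super}modular in the $(x_{t,1},x_{t,2})$ pair at an entirely ordinary consumption profile. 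The proposition as stated is therefore false without qualification, and the paper's one-line proof does not close this gap; only the $X=Y$ and $Y=2$ cases hold unconditionally, as you note. Both of your repair routes are sound and worth making explicit: either restrict the state space by a no-dominant-outlier condition guaranteeing $\sum_k u_k^2\ge -Yu_iu_j$, or replace $\st$ by $\stsqr$, whose off-diagonal Hessian entries are the constant $-2/(Y(Y-1))$ and which is unconditionally submodular by Definition~\ref{def:modularity}. The only caution on route (i) is that the restricted set must be (or be replaced by) a sublattice for the submodularity statement and the subsequent Topkis-type arguments to be meaningful, so state that condition in a form closed under componentwise $\max$ and $\min$.
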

\begin{proof}
We have the $\Haxy(\xbmt, X, Y) = \xbmtbar + \frac{Y-X}{Y-1}\fbm(Y)\st$, where the standard deviation can be written as
\begin{equation}
    \label{eq:st_standard_deviation}
    \st(\xbmt) = \sqrt{\frac{1}{Y-1} \left( \mathop{\sum}_{i=1}^{Y}{\xti^{2}} - Y \xbmtbar ^{2} \right)},
\end{equation}
with $\xbmtbar = \frac{1}{Y}(x_{t,1}+ \cdots + x_{t,Y})$. The submodularity of $\st(\xbmt)$ can be verified via the composition rule from lemma \ref{lma:compsition_supermodularity}. 
\end{proof} 
The above proposition shows that the $\Haxy$ has the same submodularity with $\Hxy$.
However, we can apply the same approach used in Theorem \ref{thm:ats_nonDR_supermodular} to establish the monotonicity of $\ats(\styt)$ in the program parameter $X$ or $Y$. On a non-DR day, the optimal policy is obtained by
\begin{equation}
    \label{eq:ats_nonDR_optimal_policy}
    \ats(\xbmt, 0, \zt) \in \argmax_{\atrng} \{ \utca + \Ebb_{\sbmtone \vert \yt = 0}[\Vtone(\sbmtone)]\},
\end{equation}
with $\xbmtone = (\at, x_{t,1}, \ldots, x_{t, Y-1})$. To show the monotonicity of $\ats(\styt)$ in $X$ or $Y$, one only needs to show that the expectation term is supermodular in $(\at, X)$ or $(\at, Y)$,
i.e., to show $\Vt(\xbmt, X, Y)$ is supermodular in $(\xti, X)$ or $(\xti, Y)$ for any $i = 1, \ldots, Y$. Based on the result from Theorem \ref{thm:ats_nonDR_supermodular}, this is equivalent to show the supermodularity of $\Haxy(\xbmt, X, Y)$.
\begin{theorem}
\label{thm:monotonicity_X}
Given a fixed value $Y$, the optimal policy on a non-DR day $\ats(\styt)$ is increasing in $X$ when the intrinsic consumption on that day is lower than the population average, and decreasing in $X$ otherwise.
\end{theorem}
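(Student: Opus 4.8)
The plan is to carry through the reduction indicated just above the statement. By Topkis's theorem (Lemma~\ref{lma:Topkis}), the maximizer $\ats(\styt)$ of the non-DR-day objective $\utca + \Ebb_{\sbmtone \vert \yt=0}[\Vtone(\sbmtone)]$ in (\ref{eq:ats_nonDR_optimal_policy}) is increasing (resp.\ decreasing) in $X$ once that objective is shown to be supermodular (resp.\ submodular) in $(\at, X)$. The term $\utca$ carries no $X$, and on a non-DR day the action enters $\sbmtone$ only as its most-recent coordinate $x_{t+1,1}=\at$; so it suffices to show $\Vtone(\xbmtone, X, Y)$ is supermodular (resp.\ submodular) in $(x_{t+1,1}, X)$. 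Following the induction in the proof of Theorem~\ref{thm:ats_nonDR_supermodular}, $X$ enters the Bellman recursion (\ref{eq:BellmanEq}) only through the DR-day rebate $\rbp[\Haxy(\xbmt,X,Y)-\at]_{+}$; since $[\,\cdot\,]_{+}$ is convex and increasing, Lemma~\ref{lma:compsition_supermodularity} transfers whatever joint modularity $\Haxy$ has in $(\xti,X)$ to the rebate, and expectation together with partial maximization over $\at$ then carry it up to every value function, exactly as before. Hence everything reduces to the sign of the mixed difference of $\Haxy$ in $(\xti, X)$.

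That computation is immediate from (\ref{eq:Hxy_final}): since $X$ appears in $\Haxy=\xbmtbar + \tfrac{Y-X}{Y-1}\fbm(Y)\st(\xbmt)$ only through the linear coefficient $\tfrac{Y-X}{Y-1}$, one gets $\Haxy(\xbmt,X+1,Y)-\Haxy(\xbmt,X,Y) = -\tfrac{1}{Y-1}\fbm(Y)\st(\xbmt)$, so the mixed difference in $(\xti, X)$ equals $-\tfrac{\fbm(Y)}{Y-1}\bigl[\st(\xbm+\delta e_i)-\st(\xbm)\bigr]$ for an increment $\delta>0$. Using (\ref{eq:st_standard_deviation}) (equivalently $\partial \st/\partial \xti = (\xti-\xbmtbar)/((Y-1)\st)$) and $\fbm(Y)>0$ for $Y\ge 2$, this mixed difference is positive exactly where $\st$ decreases in $\xti$, i.e.\ where $\xti<\xbmtbar$, and negative where $\xti>\xbmtbar$. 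So $\Haxy$ is supermodular in $(\xti, X)$ on $\{\xti<\xbmtbar\}$ and submodular on $\{\xti>\xbmtbar\}$.

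It remains to convert the coordinate condition into the statement about intrinsic consumption versus population average. The coordinate that matters is $x_{t+1,1}=\at$; by Theorems~\ref{thm:over_under}--\ref{thm:ats_threshold} the optimum is controlled by the intrinsic baseline $\atB(\zt)$, while the sample mean $\xbmtbar$ concentrates around the population mean $\mu$ of the (symmetrically truncated) consumption distribution. Thus on the sublattice over which the dynamic program actually operates, the sign of $\xti-\xbmtbar$ is governed by the sign of $\atB(\zt)-\mu$: when $\atB(\zt)<\mu$ the objective is supermodular in $(\at, X)$ there and Topkis gives $\ats(\styt)$ increasing in $X$; when $\atB(\zt)>\mu$ it is submodular and $\ats(\styt)$ is decreasing in $X$, which is the claim.

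The hard part is precisely this last step. Because the mixed difference of $\Haxy$ changes sign across the hyperplane $\xti=\xbmtbar$, $\Haxy$ is neither globally supermodular nor globally submodular in $(\xti,X)$, so Topkis's theorem cannot be invoked on all of $\Gcal$. The genuine work is to justify restricting the comparison to a sublattice on which the sign is constant, to check that the inductive propagation through (\ref{eq:BellmanEq}) stays inside that sublattice (so the modularity used at each stage is the one that truly holds), and to make precise that ``intrinsic consumption below the population average'' --- via the bounds of Theorems~\ref{thm:over_under}--\ref{thm:ats_threshold} and the concentration of $\xbmtbar$ near $\mu$ --- is exactly the condition placing the operative region on the supermodular side.
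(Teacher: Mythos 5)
Your proposal follows essentially the same route as the paper's proof: both reduce, via Topkis's theorem and the induction of Theorem~\ref{thm:ats_nonDR_supermodular}, to the joint modularity of $\Haxy$ in $(\xti,X)$, and both locate the sign change of the cross-difference at the sample mean through the V-shape of $\st$ in $\xti$ (your condition $\xti \lessgtr \xbmtbar$ is equivalent to the paper's $\xti \lessgtr \xbmtybar$ since $\xti-\xbmtbar = \tfrac{Y-1}{Y}(\xti-\xbmtybar)$). The final passage from the coordinate condition to ``intrinsic consumption versus population average,'' which you flag as the remaining hard part, is treated no more rigorously in the paper itself, which simply says one can ``roughly treat'' $\xbmtybar$ as the population average.
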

The above theorem can be illustrated by the following examples.
For instance, on a low-consumption day, a rational customer tends to inflate more under the High10of10 compared to that under the High1of10. This is reasonable because everyday matters under the High10of10 but the customer has other days to inflate under the High1of10. 
On the other hand, on a high-consumption day, a rational customer is prone to less inflation with the increase of $X$. In other words, a small $X$ makes the customer inflate more on the high-consumption days. For example, under the High1of10, a customer tends to increase the load to secure a higher baseline for the future.

\section{Approximate Dynamic Programming via Rollout Algorithm}
Solving the MDP problem using dynamic programming is computationally challenging due to the curse of dimensionality \cite{powell2007approximate}. Therefore, approximate methods are required to solve the MDP problem with a large state space \cite{bertsekas1999rollout}.
In this section, we apply the rollout algorithm to obtain a simulation-based forward dynamic programming solution. 
\subsubsection{The rollout algorithm}
The key idea of the rollout algorithm is starting from some given heuristic and constructing another policy with better performance \cite{bertsekas2013rollout}. 
Essentially, the rollout algorithm is an online forward dynamic programming procedure that selects actions to obtain the maximum expected payoff calculated based on the given heuristic policy. 
The improved performance is guaranteed by the performance improvement properties \cite{goodson2017rollout}, which we show later.
Rollout algorithms step forward in time, which requires estimating the payoff-to-go when evaluating decision rules. 
Specifically, given a heuristic policy, we calculate the expected payoff-to-go under the heuristic based on a Monte Carol simulation \cite{bertsekas1996neuro}. The formula in finding the optimal actions under the rollout algorithm is given as
\begin{equation}
    \label{eq:atR}
    \atR \in \argmax_{\atrng} \left\{\gt(\st, \at) + \displaystyle \Ebb_{\{s_i\}} \left[ \mathop{\sum}_{i=t+1}^{T} g_i(s_i, \aiL(s_i)) | \st  \right] \right\},
\end{equation} 
where $\atL(\st)$ is the given heuristic policy. The sample paths in calculating the expected payoff-to-go can be generated by applying additive white Gaussian noise to customer historical consumption. The DR events along the path are generated according to the conditional probability $p(\ytone|\yt)$. The details of generating sample paths are given in the case study.

\subsubsection{The heuristic policy}
A heuristic policy is any method to select decision rules in each stage within the space of decision rules. In this paper, a linear policy is selected as the heuristic for the rollout algorithm. A linear policy with $k$ parameters can be $a(s; \thetabm) = \theta_1 \cdot \phi_1 (s) + \cdots + \theta_k \cdot \phi_k (s)$, where $\phi_1 (s), \ldots, \phi_k (s)$ are the extracted features and $\thetabm =(\theta_1, \ldots, \theta_k)$ is the corresponding parameter vector. 
Given the optimal policy on DR days in (\ref{eq:ats_thresholdPolicy}), we approximate the policy on the non-DR days based on one feature, given by
\begin{equation}
    \label{eq:LinearPolicy_r1}
    \atL(\sbmt \vert_{\yt = 0}) = \atB + \tht \phi(\sbmt),
\end{equation}
where the feature $\phi(\sbmt)$ can be the maximum of the recent consumption. The idea of this linear policy is to facilitate customers with an easy way to inflate their baseline based on the recent consumption. The best parameter $\thtst$ can be obtained by maximizing the expected payoff from the current stage $t$ to the end stage $T$, i.e.,
\begin{equation}
    \label{eq:thtst}
    \thtst \in \argmax_{0 \le \tht \le 1} \Ebb_{\{s_i\}} \left[ \mathop{\sum}_{i=t}^{T} g_i(s_i, \aiL (s_i)) | \st  \right],
\end{equation}
where the linear policy $\atL(\st)$ on non-DR days and DR days are given in (\ref{eq:LinearPolicy_r1}) and (\ref{eq:ats_thresholdPolicy}). The sample paths for calculating the expected payoff-to-go are also generated by apply additive white Gaussian noise. For simplicity, a stationary linear policy $\thst = \thtonest$ is applied for simulation.

\subsubsection{Performance improvement property \cite{goodson2017rollout}}
The performance improvement property of the rollout algorithm is illustrated in the following definition.
\begin{definition}[Rollout Improvement Property]For a heuristic $\atL$ and a rollout policy $\atR$, we say $\atR$ rollout improving if for $t = 1, 2, \ldots, T$,
\begin{equation}
    \label{eq:Improvement}
    \Ebb \left[ \mathop{\sum}_{i=t}^{T} g_i(s_i, \aiL (s_i)) | \st  \right] \le 
    \Ebb \left[ \mathop{\sum}_{i=t}^{T} g_i(s_i, \aiR (s_i)) | \st  \right].
\end{equation}
\end{definition}
This rollout improvement property is assured based on the construction of $\atR$ in (\ref{eq:atR}) and $\atL$ in (\ref{eq:thtst}).

\section{Case Study}
In this section, we conduct a case study to show the customer's manipulation behavior under the HighXofY methods. Customer utility function is estimated from historical data. Due to the curse of dimensionality, we only give the results using dynamic programming with $Y \le 7$. Cases with larger $Y$s are conducted via the rollout algorithm.

\subsection{Bias and Data Set}
We introduce the metric of bias to evaluate the performance of the baseline method. Let $\atBH$ be the customer's historical consumption at the DR-event hour and $\Dcal$ be the set of the DR days. Then we can define the bias as
\begin{equation}
    \label{eq:bias_DR}
    \frac{\mathop{\sum}_{t\in \Dcal} {[\Bmat(\xbmt) - \atB}]}{\mathop{\sum}_{t\in \Dcal} {\atB}} \times 100\%,
\end{equation}
where $\atB$ is the intrinsic consumption and $\Bma(\xbmt)$ is the calculated baseline. 
By comparing the biases from the base case (without DR programs) with the manipulation case (with DR programs), one can identify the customer baseline manipulation level.

To properly define the reference bias, we first introduce the customer's historical consumption $\atBH$ that comes from the \textit{UMass} smart home data set \cite{barker2012smart}. This data set consists of load and weather data for 114 households for every 15-minute interval for two years and three months. Here, one data set from a typical customer\footnote{We choose the customers based on two criteria: (i) without bad data and (ii) showing a morning peak.} (Apt \#21) in 2015 is used for illustration. We use the hourly consumption on the weekdays and remove the data on holidays.
We set 9:00 am as the DR hour (with the highest average consumption). Furthermore, we consider a total of 93 days from peak consumption seasons.
However, the limited data set cannot serve a stable reference in evaluating baseline manipulation. To address this issue, we apply additive white Gaussian noise to the initial data set and create large data samples.
As a result, the time-series information of the initial data set can be preserved and the generated large data set can provide a stable reference. The initial data and the data with additive white Gaussian noise are illustrated in Fig. \ref{fig:data_Gaussian}. The signal-to-noise ratio is set as 3dB to balance the time-series information and the consumption randomness. Note that the lower bound is set to zero to avoid negative consumption.
\begin{figure}[!t]
    \centering
    \includegraphics[width=.9\linewidth]{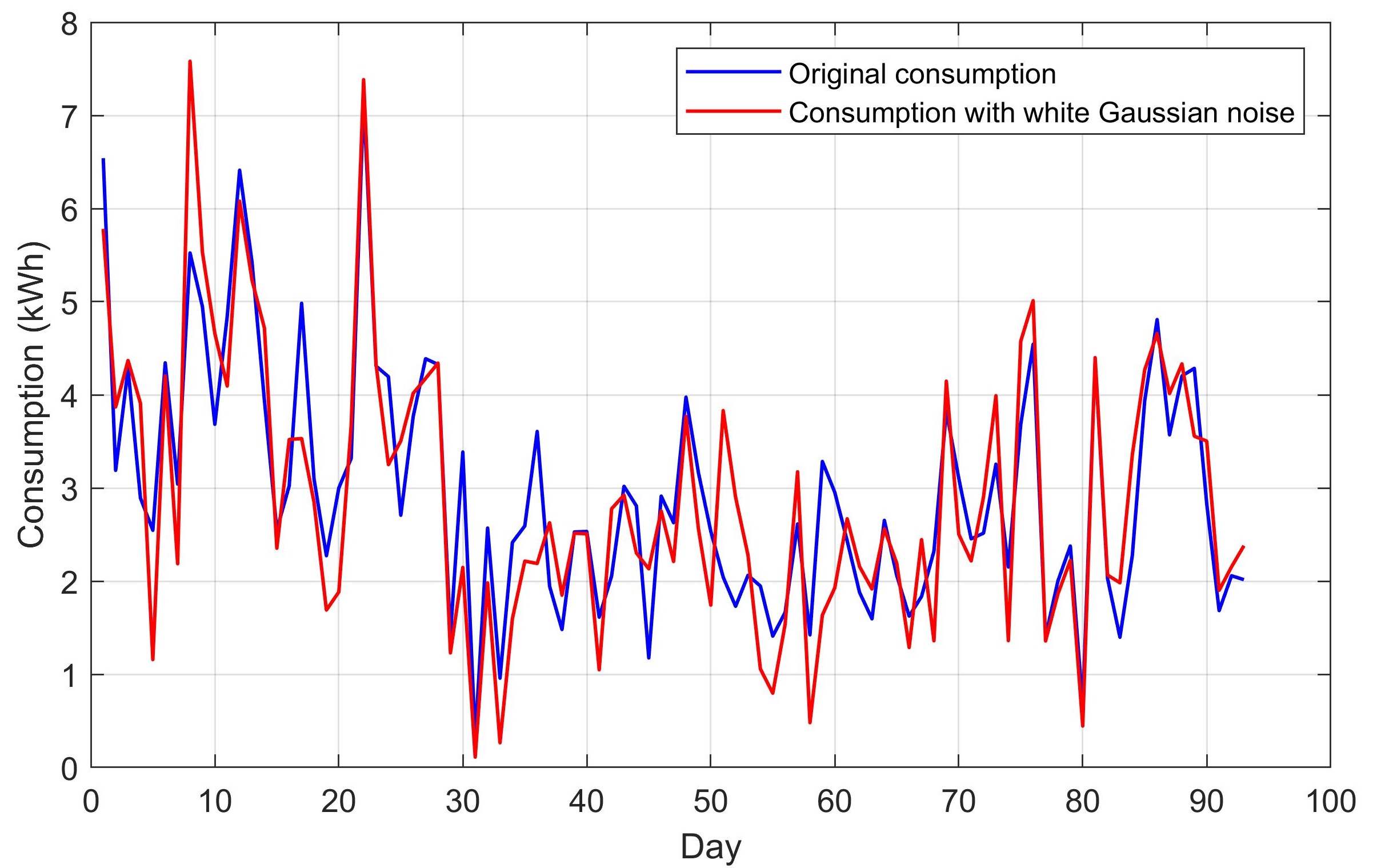}
    \caption{Illustration of the initial data and the data with additive white Gaussian noise.}
    \label{fig:data_Gaussian}
\vspace{-3mm}
\end{figure}

In addition, we generate independent DR event sequences to increase the randomness of the sample path, which is a $T=93$ days consumption and DR signals.
A total of $N = 100$ paths are generated for simulation where the DR events are initiated according to the conditional probability $p(\ytone|\yt)$.
An overall around 30\% DR rate is considered in this paper. Considering the fact that the DR event is more likely activated when the previous day was a DR day, we set the conditional probabilities as $p_0 \coloneqq p(\ytone = 1|\yt = 0) = 0.2$ and $p_1 \coloneqq (\ytone = 1|\yt = 1) = 0.4$.

\subsection{Estimation of Customer Utility Function}
Given a customer's historical consumption $\atBH$, we propose an innovative way to construct the customer utility function. The consumption mean $\abar$ and upper bound $\acheck$ of the historical consumption are used for this construction.
We define the utility function as $\utaz = \zt \cdot \uat$, where $\uat$ is the reference utility under which the consumption mean $\abar$ is the customer's optimal decision. Besides, we use the exponential form of the utility function as $\uat = \gamma (1-e^{-\frac{\at}{\rho}})$,
where $\gamma$ and $\rho$ are two parameters called utility ratio and utility shape, respectively. 
The determination of these two parameters is given below.

\textit{The utility shape $\rho$:}
we introduce the maximum relative utility as $\ucheck \coloneqq \frac{\ut(\acheck, \zt)}{ \lim_{\at \to \infty} {\utaz}} = 1-e^{-\frac{\acheck}{\rho}}$.
In practice, we choose $\ucheck = 0.99$\footnote{This is a subjective choice and any value that is close to 1 is reasonable. Furthermore, the results are not very sensitive to the selection of $\ucheck$.}. Therefore, the utility shape parameter $\rho$ can be determined as
\begin{equation}
    \label{eq:rho}
    \rho = -\frac{\acheck}{\ln(1-\ucheck)}.
\end{equation}

\textit{The utility ratio $\gamma$:}
$\gamma$ is defined as the ratio of one-unit utility to one monetary value (\$). According to the definition of $\uat$, i.e., under which the average consumption $\abar$ is optimal, we have $\abar \in \argmax_{\atrng} [\uat - \ctat]$.
Based on the first-order optimality condition, the derivative is zero at $\abar$, i.e., $\label{eq:muHequation}
    u^{\prime}(\abar) \coloneqq (\gamma/\rho) e^{-\abar/\rho} = \ctd (\abar) = \omega$.
Therefore, we have the utility ratio as
\begin{equation}
    \label{eq:gamma}
    \gamma = \omega \rho e^{\frac{\abar}{\rho}}.
\end{equation}

Based on the historical consumption data from the customer \#21, we have the utility parameters as $\rho = 1.56$ and $\gamma = 1.25$. Fig. \ref{fig:utilities} illustrates the customer utilities under $\zt = [0.8, 1, 1.2]$, shown as the blue curves. The net utility function, shown as the red curve, is concave in consumption.
\begin{figure}[!t]
    \centering
    \includegraphics[width=.9\linewidth]{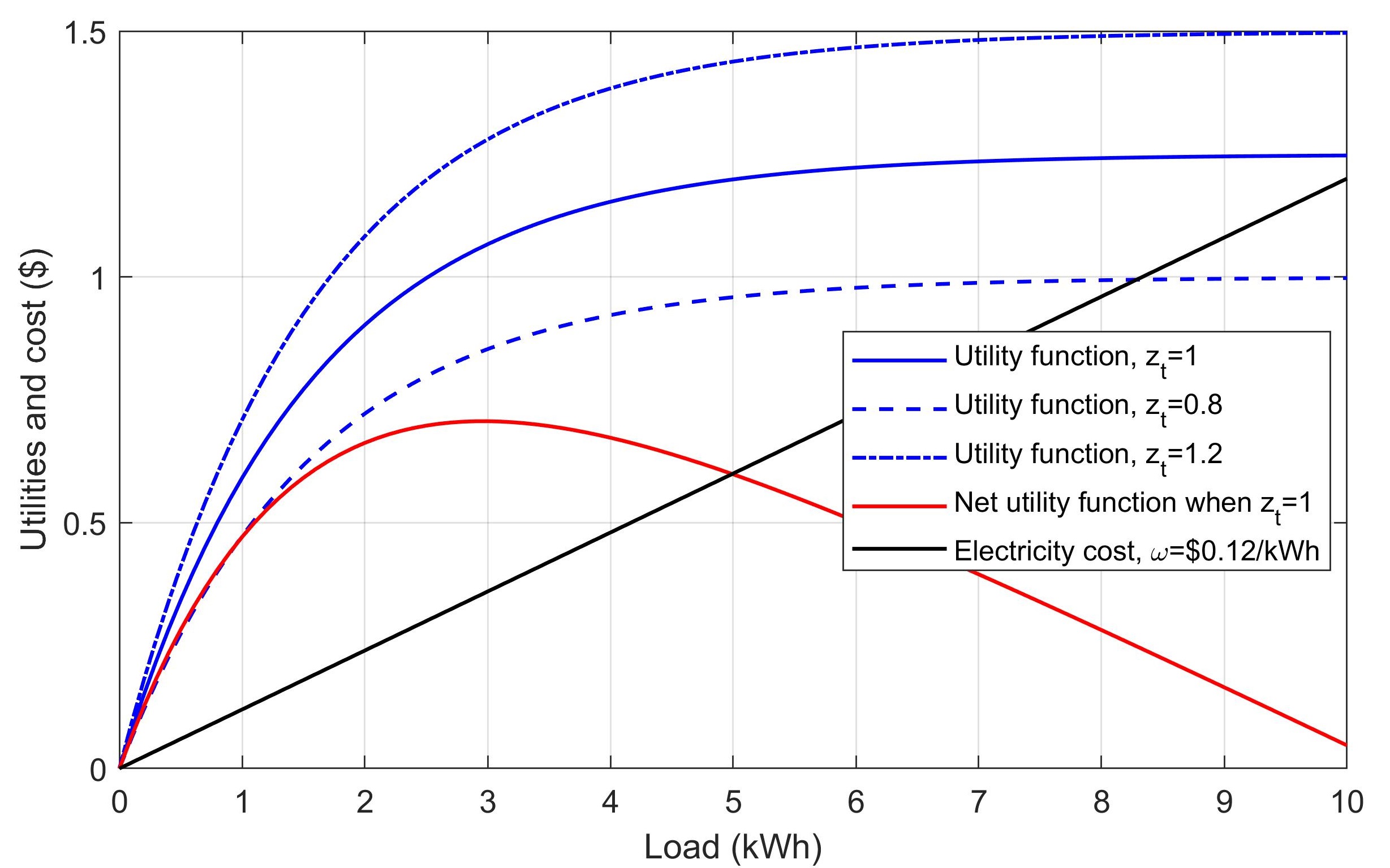}
    \caption{The customer's utility functions and net utility function.}
    \label{fig:utilities}
\vspace{-3mm}
\end{figure}

\subsection{Results via Dynamic Programming}
We solve the MDP problem by following the dynamic programming approach shown in (\ref{eq:BellmanEq}). 
To find the optimal policies, we discretize the action into 10 discrete values. 
Due to the curse of dimensionality, we only consider $Y \le 7$ here. 
\subsubsection{Bias and manipulation in terms of $X$ in HighXofY}
Fig. \ref{fig:bias_mani_Y5_U_X} shows the results for HighXofY when $Y=7$. 
We first observe that the bias is decreasing in $X$, which matches the monotonicity of HighXofY in $X$.
The bias is around zero when $X=7$ because the DR events are independently generated of customer consumption.
The bias with DR manipulation is much higher than the counterparts without DR. 
The difference between the bias with and without DR is captured by the manipulation curves (red curves). 
We also notice that the manipulation is a single-peaked function in $X$, where $X=$ 6 and 7 leads to the lowest manipulation level.
\begin{figure}[!t]
    \centering
    \includegraphics[width=.9\linewidth]{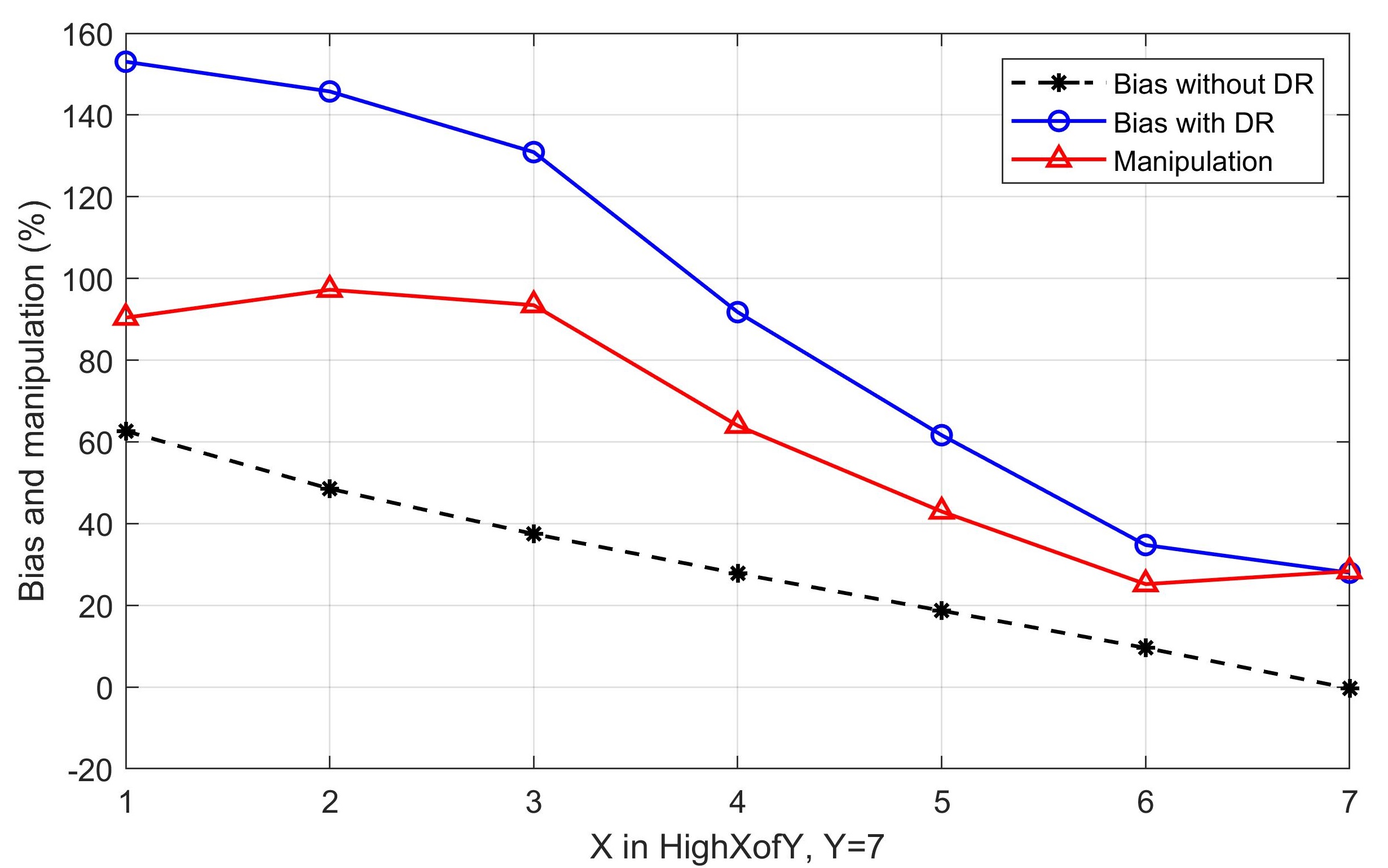}
    \caption{The bias and manipulation in terms of $X$ when $Y=7$ and $\rbp = \$0.12$/kWh.}
    \label{fig:bias_mani_Y5_U_X}
\vspace{-4mm}
\end{figure}
\subsubsection{Results in terms of rebate price $\rbp$} 
We compare the results under three rebate prices, i.e., \$0.06/kWh, \$0.12/kWh, and \$0.18/kWh, with $\omega$ = \$0.12/kWh. The curves in Fig.~\ref{fig:bias_mani_Y5_pir_X} show that the higher of the rebate price, the more of the customer manipulation. The three curves also preserve the single-peaked property. 
When the rebate price is \$0.06/kWh, the manipulation is relatively low when $ X\ge 3$ and almost zero when $X=7$. This shows that a low rebate price and a large $X$ can effectively alleviate the customer's manipulation. 
\begin{figure}[!t]
    \centering
    \includegraphics[width=.9\linewidth]{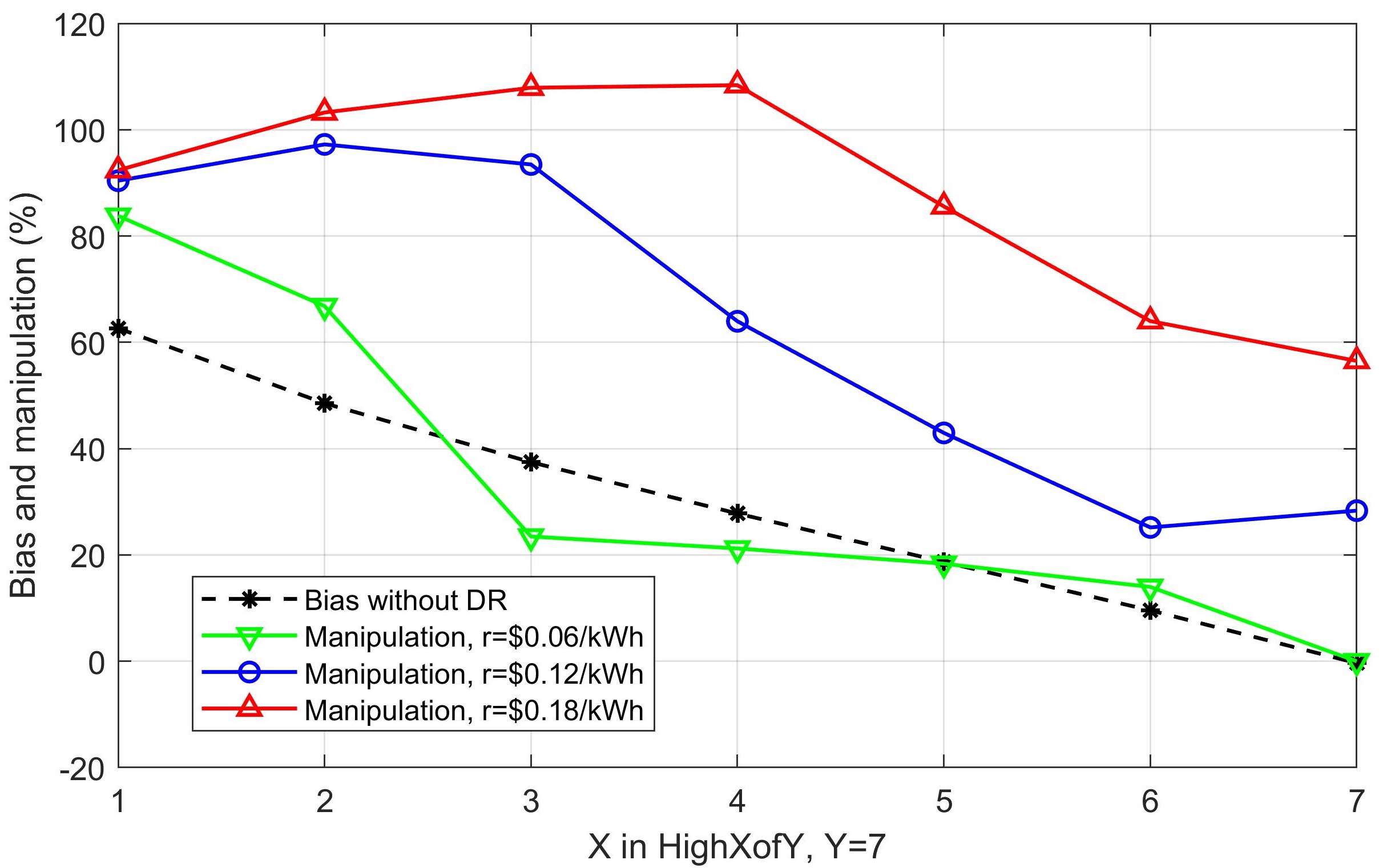}
    \caption{The bias and manipulation in terms of $\rbp$ and $X$ when $Y=7$.}
    \label{fig:bias_mani_Y5_pir_X}
\vspace{-4mm}
\end{figure}
\subsubsection{Results under HighYofY method}
Fig. \ref{fig:HighXofY357} shows the manipulation under the HighXofY method when $Y=3,\ldots,7$. 
All the manipulation curves preserve the single-peaked property and the shapes look like a tilted and stretched letter \textbf{Z} when $Y \ge 5$. 
The manipulation level remains almost the same under the HighYofY method.
\begin{figure}[!t]
    \centering
    \includegraphics[width=.9\linewidth]{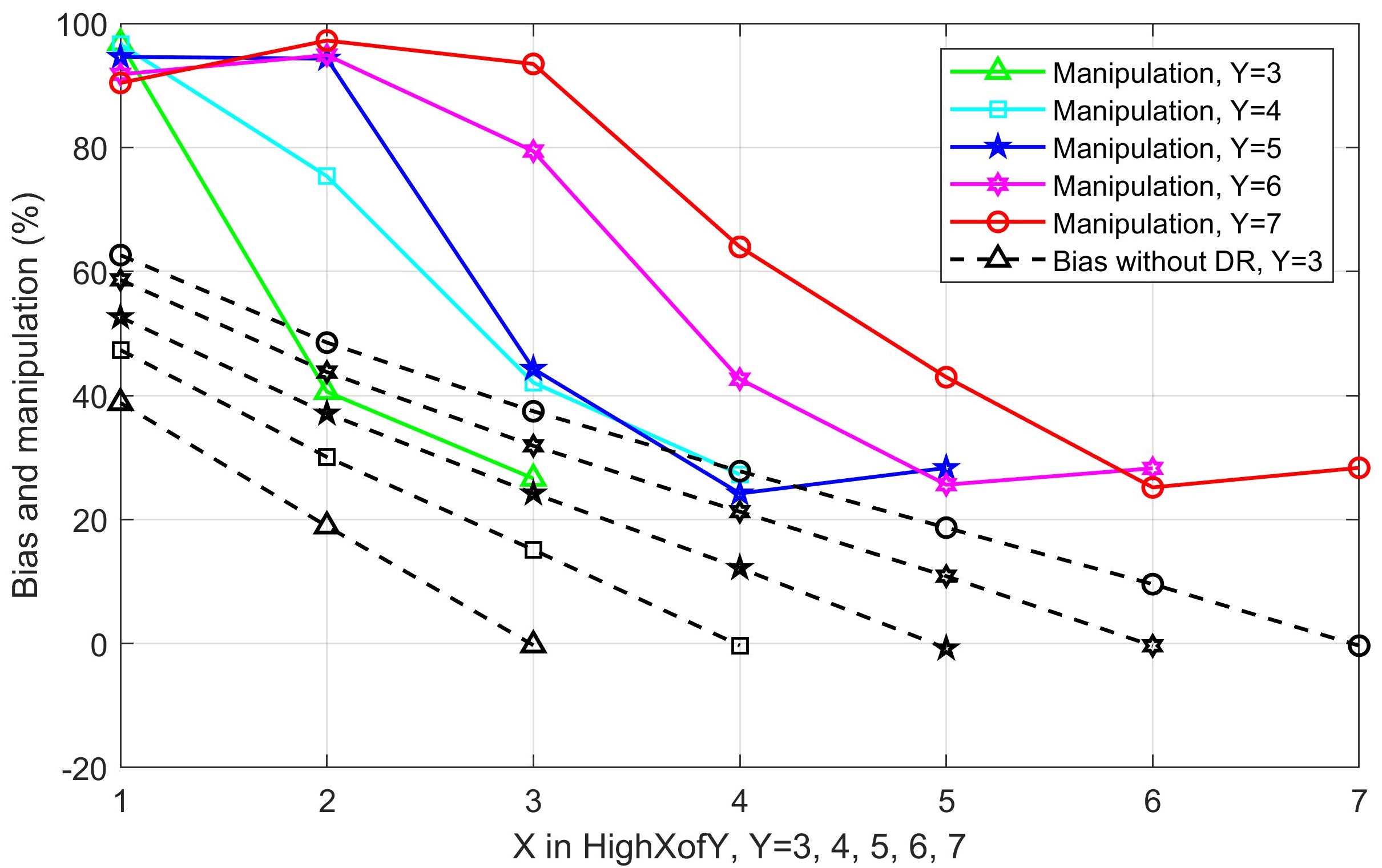}
    \caption{The bias and manipulation for HighXofY, $Y=3,\ldots,7$, $\rbp = \$0.12$/kWh.}
    \label{fig:HighXofY357}
\vspace{-4mm}
\end{figure}
\subsection{Results via Rollout Algorithm}
We generate 10 paths to calculate the expected payoff-to-go for the linear policy, where the best parameter is
$$ 
\theta^{*}(X) = \begin{cases}
                0.148, & \text{when } X=1,\ldots, 7,\\
                0.125, & \text{when } X=8, 9, 10.
              \end{cases}
$$
When Y = 7, we have the same threshold parameters, with $\theta^{*}(X) = 0.148$ for $X=1,\ldots, 5$, and $\theta^{*}(X) = 0.125$ for $X= 6,7$.
Another 1000 paths are generated to obtain the rollout policy (\ref{eq:atR}).
Fig. \ref{fig:exact_vs_rollout} shows the simulation results for HighXofY when $Y=$ 7 and 10. The two blue curves are the manipulation levels under the rollout algorithm while the red curve shows the exact result for $Y=7$. The two approximated curves preserve the single-peaked property. The manipulation levels drop quickly around $X = Y/2$ and remain almost the same when $X$ is close to $Y$. 
We also compare the results from dynamic programming and from the rollout algorithm when $Y=7$: although there are some differences in manipulation when $X \le 5$, these two curves are similar in shape and almost match when $X=$ 6 and 7. This comparison confirms the effectiveness of the rollout algorithm.
\begin{figure}[!t]
    \centering
    \includegraphics[width=0.95\linewidth]{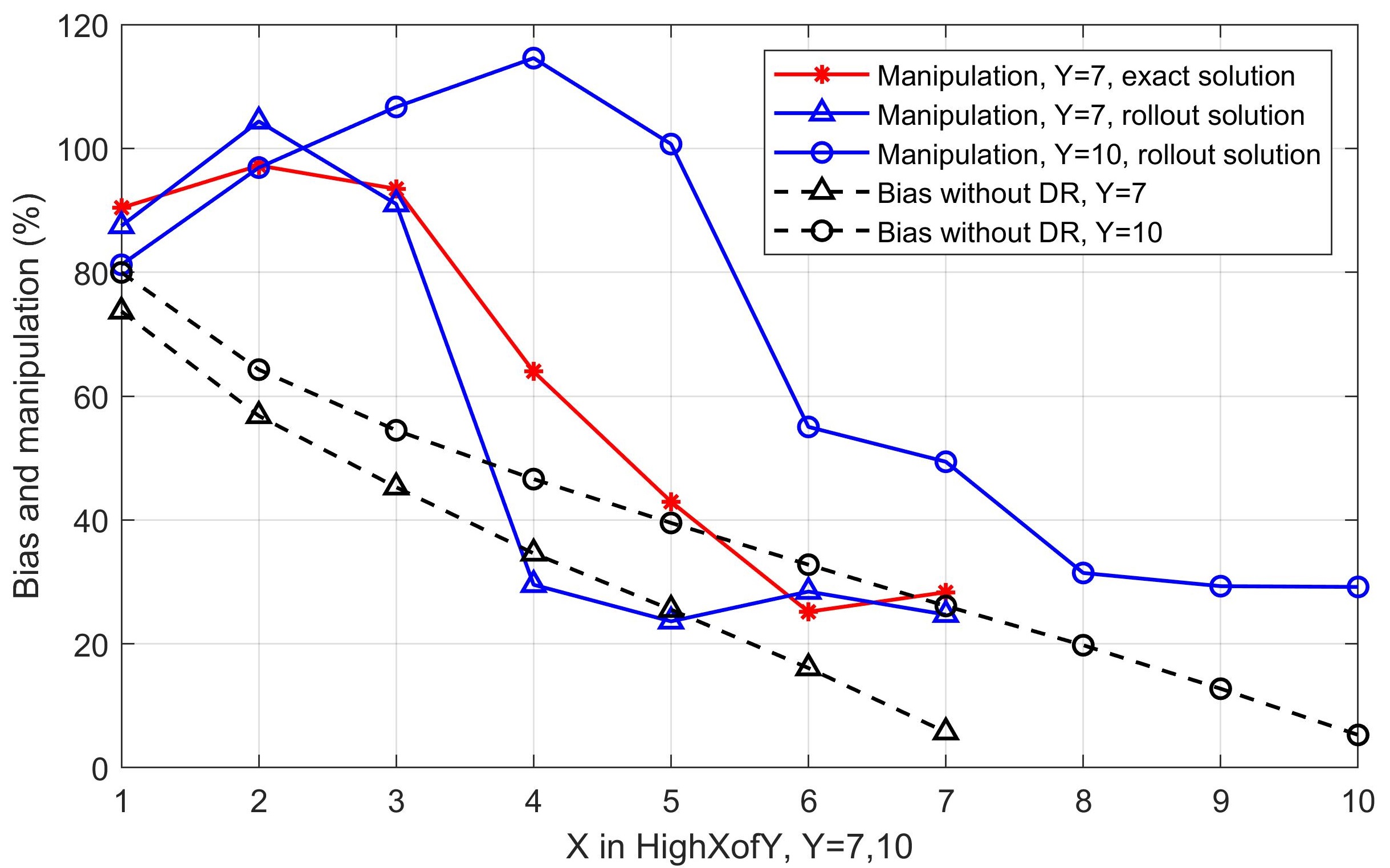}
    \caption{Comparison between the exact solution and the rollout solution for HighXofY, $Y=$ 7 and 10, $\rbp = \$0.12$/kWh.}
    \label{fig:exact_vs_rollout}
\vspace{-4mm}
\end{figure}
 
\section{Conclusion}
In this paper, we give the baseline manipulation analysis of the used accepted average baseline methods. 
We establish a Markov decision process to formulate the customer's profit-maximization problem. By comparing the optimal policies to the intrinsic baseline, we verify a rational customer's behavior: underconsumption on DR days and overconsumption on non-DR days. 
An approximated HighXofY method is proposed to understand how the distribution of the customer consumption and the selection of program parameters affect the calculated baseline. 
The case study shows that the customer baseline manipulation is a single-peaked function in $X$ and remains almost the same under different HighYofY methods. 
The linear policy based rollout algorithm provides adequate accuracy compared to the results from dynamic programming. These results give practical information for DR program providers to design accurate and robust baseline methods.

\bibliographystyle{IEEEtran}
\bibliography{references}

\appendices

\section{The Proof of the Optimal Policies}
\subsection{Proof of Theorem \ref{thm:over_under}}
\begin{proof}
The proof is showed for the DR days and non-DR days. (i) On a DR day ($\yt = 1$), the state update rule (\ref{eq:x_transition}) shows that $\xbmtone = \xbmt$, which means the $\Vtone(\sbmtone)$ is independent of $\at$. Therefore, the corresponding optimal policy can be found via
\begin{equation}
    \label{eq:ats_DR}
    \ats(\xbmt, 1, \zt) \in \argmax_{\atrng} \{{\utcaz + \rbp [\Bma(\xbmt)-\at]_{+}}\}.
\end{equation}
Because the rebate $\rbp [\Bma(\xbmt)-\at]_{+}$ is decreasing in $\at$, one can easily show that $\ats(\xbmt, 1, \zt) \le \atB(\zt)$.
(ii) On a non-DR day ($\yt = 0$), the rebate is zero and the optimal policy can be found through
\begin{equation}
    \label{eq:ats_nonDR}
    \ats(\xbmt, 0, \zt) \in \argmax_{\atrng} \{ {\utcaz + \Ebb_{\sbmtone \vert \sbmt} [\Vtone(\sbmtone)]} \}.
\end{equation}
Suppose the expectation term is increasing in $\at$, one can show that $\atB(\zt) \le \ats(\xbmt, 0, \zt)$. Note that the expectation preserves monotonicity. It is equivalent to show that $\Vtone(\sbmtone)$ is increasing in $x_{t+1, 1}$ ($x_{t+1, 1} = \at$ when $\yt = 0$), which is the result from Lemma \ref{lma:Vt_mono}. 
\end{proof}
\subsection{Proof of Theorem \ref{thm:ats_threshold}}
\begin{proof}
For convenience, we denote the $\utcaz$ as $\utc(\at)$ and $\Bma(\xbmt)$ as $\Bma$. The optimization problem on a DR day (\ref{eq:ats_DR}) can be divided into two subproblems:
\begin{equation}
    \label{eq:ats_opt_0}
    \ats \vert_{\Bma -\at \le 0} \in \argmax_{\Bma \le \at \le \amax } {\utca},
\end{equation}
\begin{equation}
    \label{eq:ats_opt_1}
    \ats \vert_{\Bma-\at \ge 0} \in \argmax_{0 \le \at \le \Bma} \{ {\utca + \rbp(\Bma - \at)} \}.
\end{equation}
The corresponding optimal policies can be obtained as
\begin{equation}
    \label{eq:ats_0}
    \atsn \coloneqq \ats \vert_{\Bma -\at \le 0} = \max\{\atB, \Bma\} \ge \atB,
\end{equation}
\begin{equation}
    \label{eq:ats_1}
    \atsp \coloneqq \ats \vert_{\Bma-\at \ge 0} = \min\{\atU, \Bma\} \le \atU.
\end{equation}
Note that we have $\atU \le \atB$ since $\rbp(\Bma - \at)$ is decreasing in $\at$.
Then, the optimal policy will be chosen by selecting the larger optimal value, given by 
\begin{equation}
    \label{eq:ats_oneInTwo}
    \ats \vert_{\yt=1} \in  \argmax_{\atsn, \atsp} \{\Vt(\sbmt) \vert_{\Bma -\at \le 0}, \Vt(\sbmt) \vert_{\Bma-\at \ge 0}\}.
\end{equation}
The optimal values under these two cases are given by
\begin{equation}
    \label{eq:V_0}
    \Vtn \coloneqq \Vt(\sbmt) \vert_{\Bma -\at \le 0} = \begin{cases}
    \utc(\atB), & \text{if } \Bma \le \atB,\\
    \utc(\Bma), & \text{if } \Bma > \atB.
    \end{cases}
\end{equation}
\begin{equation}
    \label{eq:V_1}
    \Vtp \coloneqq \Vt(\sbmt) \vert_{\Bma-\at \ge 0}
     = \begin{cases}
    \utc(\Bma), & \text{if } \Bma \le \atU, \\
    \utc(\atU) + \rbp (\Bma - \atU), & \text{if } \Bma > \atU.
    \end{cases}
\end{equation}
Note that $\Vtn(\Bma)$ is a fixed value $\utc(\atB)$ in $[0, \atB]$, but decreasing in $[\atB, \amax]$. $\Vtp(\Bma)$ increasing in its range $[0,\amax]$ since $\atU \le \atB$. Furthermore, at $\Bma = \atU$, we have $\Vtn(\atU) = \utc(\atB) > \Vtp(\atU) = \utc(\atU)$. At $\Bma = \atB$, one can prove $\Vtn(\atB) < \Vtp(\atB)$ by showing 
\begin{equation}
    \label{eq:inequality_concave}
    \utc(\atB) - \utc(\atU) < \rbp (\atB - \atU).
\end{equation}
This is true due to the concavity of $\utc(\at)$ and $\dutc(\atB) = 0$, $\dutc(\atU)-r=0$ from the first-order optimality condition. Therefore, there must be a threshold baseline $\Bmath$ that can be obtained via
\begin{equation}
    \label{eq:Bth_threshold_eq}
    \utc(\atB) = \utc(\atU) + \rbp (\Bmath - \atU),
\end{equation}
which is equivalent to (\ref{eq:Bth_thresholdBaseline}). The corresponding optimal policy can be determined by (\ref{eq:ats_oneInTwo}) accordingly, given in (\ref{eq:ats_thresholdPolicy}).
\end{proof}

\subsection{Proof of Theorem \ref{thm:ats_nonDR_supermodular}}
\begin{proof}
For convenience, we denote the $\utcaz$ as $\utc(\at)$ and $\Bma(\xbmt)$ as $\Bma$. The optimal policy on a non-DR day can be reformulated as
\begin{equation}
    \label{eq:ats_nonDR_v2}
    \ats(\xbmt, 0, \zt) \in \argmax_{\atrng} \qt(\xbmt, \at),
\end{equation}
where $\qt(\xbmt, \at)$ is the state function defined as 
\begin{equation}
    \label{eq:q_function}
    \qt(\xbmt, \at) = \utca + \Ebb_{\sbmtone \vert \sbmt}[\Vtone(\sbmtone)].
\end{equation}
Suppose the state function $\qt(\xbmt, \at)$ is supermodular in $(\xbmt, \at)$, one can show $\ats(\xbmt, 0, \zt)$ is increasing in $\xbmt$ based on the Topkis's theorem. It is evident that $\utca$ is supermodular in $(\xbmt, \at)$. We can show $\qt(\xbmt, \at)$ is supermodular if the expectation term $\Ebb_{\sbmtone \vert \sbmt} [\Vtone(\sbmtone)]$ is also supermodular in $(\xbmt, \at)$. Note that when $\yt = 0$, the state on day $t+1$ can be written as
\begin{equation}
    \label{eq:stone_state}
    \sbmtone = (\at, x_{t,1}, \ldots, x_{t,Y-1},\; \ytone, \ztone).
\end{equation}
Therefore, the only missing thing is to show that $\Vtone(\sbmtone)$ is supermodular in $\xbmtone$, which can be proven by induction. 

At stage $t=T-1$, we have $\Vtone(\sbmtone) = V_{T}(\sbm_{T}) = 0$. As a result, $\Vt(\styt) = \max_{\atrng} \{\utca\}$ is independent of $\xbmt$ and thus supermodular. On the other hand,
\begin{equation}
    \label{eq:Vt_DR_T-1}
    \begin{split}
    \Vt(\sbmt \vert \yt = 1) & = \max_{\atrng} \{ {\utca + \rbp[\Bma - \at]_{+}} \} \\
    & = \begin{cases}
    \utc(\atB, \zt), & \text{if } \Bma \le \Bmath,\\
    \utc(\atU, \zt) + \rbp(\Bma - \atU), & \text{if } \Bma > \Bmath.
    \end{cases}
    \end{split}
\end{equation}
According to the result from theorem \ref{thm:ats_threshold}, $\Vt(\Bma\vert \yt = 1)$ is convex and increasing in $\Bma$ . Based on the assumption, $\Bma$ is increasing and supermodular in $\xbmt$. Apply the composition rule from lemma \ref{lma:Vt_mono}, one can show that $\Vt(\xbmt, 1, \zt)$ is supermodular in $\xbmt$. Now suppose the $\Vtone(\sbmtone)$ is supermodular in $\sbmtone$, then we need to show $\Vt(\sbmt)$ is supermodular in $\xbmt$. When $\yt = 1$, the optimal value function is 
\begin{equation}
    \label{eq:Vts_DR}
    \begin{split}
    \Vt(\sbmt \vert \yt = 1) & = \max_{\atrng}\{ {\utca + \rbp[\Bma-\zt]_{+}} \}\\
    & \quad \; + \Ebb_{\sbmtone \vert \sbmt} [\Vtone(\sbmtone)].
    \end{split}
\end{equation}
The first term has been proved to be supermodular in $\xbmt$ above. The second expectation term is supermodular in $\xbmtone$, i.e., $\xbmt$ when $\yt = 1$, by assumption. Then the statement is true for $\yt = 1$. When $\yt = 0$, we have the optimal value function as
\begin{equation}
    \label{eq:Vts_nonDR}
    \Vt(\sbmt \vert \yt = 1) = \max_{\atrng} \{ {\utca + \Ebb_{\sbmtone \vert \sbmt} [\Vtone(\sbmtone)]} \}.
\end{equation}
By assumption, $\Vtone(\sbmtone)$ is supermodular in $\xbmtone = (\at, x_{t,1}, \ldots, x_{t,Y-1})$. Therefore, the whole term inside the $\max\{\cdot\}$ is supermodular in $(\xbmt, \at)$. As a result, $\Vt(\sbmt \vert \yt = 1)$ is supermodular in $\xbmt$ since the maximization over a lattice $\atrng$ preserves the supermodularity \cite{simchi2005logic}. Thus the statement $\Vt(\sbmt)$ is supermodular in $\xbmt$ has been proved, as well as the main conclusion in this theorem.
\end{proof}
\subsection{Proof of Theorem \ref{thm:monotonicity_X}}
\begin{proof}
The approximated HighXofY can be written as a function of $\xti$ and $X$, i.e., $\Haxy(\xti, X) = \xbmtbar + \frac{Y-X}{Y-1}\fbm(Y) \st(\xti)$. The standard deviation can be reformulated as
\begin{equation}
    \label{eq:Haxy_X}
    \st(\xti) = \sqrt{\frac{Y-2}{Y-1} \stysqr + \frac{(\xti - \xbmtybar)^{2}}{Y}},
\end{equation}
where $\xbmtybar$ and $\stysqr$ are the sample mean and sample variance of the rest $Y-1$ data $\{x_{t,j}\}_{j=1,\ldots,Y, j \neq i}$. Specifically, we have the expressions as $\xbmtybar = \frac{1}{Y-1}\mathop{\sum}_{j=1,\ldots,Y, j \neq i} (x_{t,j})$ and $\stysqr = \frac{1}{Y-2}\mathop{\sum}_{j=1,\ldots,Y, j \neq i} {(x_{t,j} - \xbmtybar)^{2}}$.

Clearly we can see that $\st(\xti)$ is decreasing in $(0, \xbmtybar)$ and increasing in $(\xbmtybar, +\infty)$. As a result, one can show that the $\Haxy(\xti, X)$ is supermodular in $(\xti, X)$ when $\xti < \xbmtybar$, and submodular in $(\xti, Y-X)$ when $\xti > \xbmtybar$, according to the definition of supermodularity or its second partial derivative. Then, as long as the intrinsic consumption is less than the average of the other $Y-1$ data, then $\ats(\styt)$ is increasing in $X$. We can roughly treat the sample average the $\xbmtybar$. On the other hand, if the $\at$ is large enough, i.e., larger than the population mean, then $\ats(\styt)$ is decreasing in $X$.
\end{proof}

\end{document}